\let\origvec\vec
\documentclass{llncs}

\let\vec\origvec

\usepackage[utf8]{inputenc}
\usepackage[T1]{fontenc}
\usepackage[english]{babel}
\usepackage{amsfonts}         
\usepackage{amsmath}
\usepackage{amssymb}
\usepackage{stmaryrd}
\usepackage{authblk}
\usepackage{xspace}
\usepackage[misc,geometry]{ifsym}
\usepackage{graphicx}
\usepackage{ltl}
\usepackage{comment}

\newcommand{\dep}[1]{\mathord{=}\!\left(#1\right)}
\newcommand{\Tr}{\mathrm{Tr}}
\newcommand{\FO}{\mathrm{FO}}

\newcommand{\LTL}{\protect\ensuremath{\mathrm{LTL}}\xspace}

\newcommand{\QPTL}{\protect\ensuremath{\mathrm{QPTL}}\xspace}

\newcommand{\teamltl}{\protect\ensuremath{\mathrm{TeamLTL}}\xspace}

\newcommand{\hyltl}{\protect\ensuremath{\mathrm{HyperLTL}}\xspace}

\newcommand{\HQPTLP}{\protect\ensuremath{\mathrm{HyperQPTL\textsuperscript{\hskip-2pt\small +}}}\xspace}
\newcommand{\HQPTL}{\protect\ensuremath{\mathrm{HyperQPTL}}\xspace}

\newcommand{\SO}{\protect\ensuremath{\mathrm{SO}}\xspace}

\newcommand{\dfn}{\mathrel{\mathop:}=}


\newcommand{\dom}{\mathrm{dom}}

\newcommand{\LL}{\mathcal{L}}




\ifdefined\U \renewcommand{\U}{\LTLuntil} \else \newcommand{\U}{\LTLuntil} \fi
\ifdefined\W \renewcommand{\W}{\LTLweakuntil} \else \newcommand{\W}{\LTLweakuntil} \fi

\ifdefined\G \renewcommand{\G}{\LTLg} \else \newcommand{\G}{\LTLg} \fi



\newcommand{\N}{\mathbb N}

\newcommand{\traces}{\mathrm{Traces}}


\newcommand{\PSPACE}{\protect\ensuremath{\mathrm{PSPACE}}}


\renewcommand{\phi}{\varphi}
\renewcommand{\epsilon}{\varepsilon}

\newcommand{\kK}{\mathrm{K}}

\newcommand{\cneg}{\sim}





\begin{document}
\title{On the Expressive Power of  TeamLTL and First-Order Team Logic over Hyperproperties}

\author{Juha Kontinen\orcidID{0000-0003-0115-5154} \and Max Sandstr\"om\textsuperscript{(\Letter)}\orcidID{0000-0002-6365-2562}}
\institute{University of Helsinki, Department of Mathematics and Statistics, Helsinki, Finland \email{\{juha.kontinen,max.sandstrom\}@helsinki.fi}}
\maketitle

\begin{abstract}
In this article we study linear temporal logics with team semantics (\teamltl) that are novel logics for defining hyperproperties. We define Kamp-type translations of these logics into fragments of first-order team logic  and second-order logic. We also characterize the expressive power and the complexity of model-checking and satisfiability  of  team logic and second-order logic by relating them to second- and third-order arithmetic.  Our results set in a larger context the recent results of L\"uck showing that the extension of TeamLTL by the Boolean negation is highly undecidable under the so-called synchronous semantics. We also study stutter-invariant fragments of extensions of TeamLTL. 
\keywords{Hyperproperties \and Linear temporal logic \and Team semantics}
\end{abstract}
\section{Introduction}
Linear temporal logic (\LTL) is a simple logic for formalising concepts of time. It has become important in theoretical computer science, when Amir Pnueli connected it to system verification in 1977, and within that context the logic has been studied extensively \cite{pnueli}. With regards to expressive power, a classic result by Hans Kamp from 1968 shows that \LTL is expressively equivalent to $\FO^2(<)$.

\LTL has found applications in the field of formal verification, where it is used to check whether a system fulfils its specifications. However, the logic cannot capture all of the interesting specifications a system may have, since it cannot express dependencies between its executions, known as traces. These properties, coined hyperproperties by Clarkson and Schneider in 2010, include properties important for cybersecurity such as noninterference and secure information flow \cite{clarkson}. Due to this background, extensions of \LTL have recently been the focus of research.	In order to specify hyperproperties, temporal logics like \LTL and \QPTL  have been extended with explicit trace and path quantification to define  \hyltl~\cite{DBLP:conf/post/ClarksonFKMRS14} and  \HQPTL~\cite{MarkusThesis,DBLP:conf/lics/CoenenFHH19}, respectively.

 \hyltl is one of the most extensively studied of these extensions. Its formulas are interpreted over sets of traces and the syntax extends \LTL with quantification of traces. A serious limitation of  \hyltl  is that only a fixed number of traces can be named via the quantifiers when evaluating a formula and hence  global hyperproperties  cannot be expressed in   \hyltl. 

Analogously to Kamp's theorem, \hyltl can be also related to first-order logic $\FO(<,\mathsf{E})$ by encoding a set of traces $T$ in a natural way by a first-order structure $T\times\mathbb{N}$ that, in addition to linear-orders for the traces, has  an equal level predicate $\mathsf{E}$ to synchronize time between different traces \cite{Finkbeiner017}. 

On the other hand, there are alternative approaches to extending LTL to capture hyperproperties. Team semantics is a framework in which one moves on from considering truth through single assignments to regarding teams of assignments as the linchpin for the satisfaction of a formula \cite{vaananen07,KrebsMV15}. Clearly, this framework, when applied to \LTL, provides an approach on the hyperproperties. Krebs et al. in 2018 introduced two semantics for \LTL under team semantics: the synchronous semantics and the asynchronous variant that differ on the interpretation of the temporal operators \cite{kmvz18}. In team semantics the temporal operators advance time on all traces of the current team and with the disjunction $\lor$, a team can be split into two parts during the evaluation of a formula, hence the nickname splitjunction.  

While \hyltl and other  hyperlogics have been studied extensively, many of the basic properties of 
\teamltl are still not understood. Already in \cite{kmvz18} it was shown that asynchronous \teamltl collapses to LTL  and  that the synchronous  version and \hyltl are incomparable in expressivity~\cite{kmvz18}. Furthermore,  the model checking problem of synchronous \teamltl without splitjunctions $\lor$ is in $\PSPACE$~\cite{kmvz18}. Recently it was shown by L\"uck that the complexity of satisfiability and model checking of synchronous \teamltl with Boolean negation $\cneg$ is equivalent to the decision problem of third-order arithmetic~\cite{LUCK2020} and hence highly undecidable. Furthermore, a more fine-grained analysis of the complexity of synchronous \teamltl has been obtained in \cite{VBHKF20} where a decidable fragment of \teamltl was identified and new undecidable extensions that allow restricted uses of the Boolean negation. The paper also showed that \teamltl and its extensions can be translatated to \HQPTLP, which is an extension of \hyltl by (non-uniform) quantification of propositions \cite{VBHKF20}.

In this article we analyse the expressivity and complexity of \teamltl via a different route compared to that of
 \cite{VBHKF20}. We define several translations between extensions of \teamltl and first-order team logic  
 interpreted over the structures $T\times\mathbb{N}$. The benefit of translating  \teamltl into first-order team logic is that, e.g., finding a translation for $\vee$, which is the main source of complexity of \teamltl, becomes trivial. Furthermore such  translations allow us to utilize the better understanding of the properties of logics in first-order team semantics in the study of \teamltl. 
 
 We consider both the synchronous and asynchronous \teamltl extended by the Boolean negation. In Section 3 we show that  the asynchronous version can be   translated into  $\FO^3(\dep{\ldots},\sim)$, whereas for the synchronous variant the so-called  equal-level predicate is also needed. In Section 4 we define a version of  stutter-equivalence suitable for asynchronous semantics and show that $X$-free $\teamltl(\sim)$ -formulas are stutter invariant.
In Section 5 we consider the full first-order team logic, equivalently second-order logic, as a language for hyperproperties. We show that any second-order definable hyperproperty can be expressed in third-order arithmetic and, over countable teams, in second-order arithmetic. We also show that \SO captures the trace-order invariant fragment of second-order arithmetic over countable teams. Finally in Section 6 we combine our translations with the results of L\"uck to show that any logic between the synchronous $\teamltl(\sim)$ and \SO inherits the complexity properties of $\teamltl(\sim)$. 

\section{Preliminaries}
In this section we briefly discuss the basics of linear temporal logic, its team semantic extension, and first-order team semantics. We begin with describing the classical semantics for \LTL.

Let $\mathrm{AP}$ be a set of atomic propositions. A \textit{trace} $t$ over  $\mathrm{AP}$ is an infinite sequence $t\in(2^\mathrm{AP})^\omega$.
	We denote a trace as $t=(t(i))_{i=0}^\infty$, and given $j\geq 0$ we denote the suffix of $t$ starting at the $j$th element $t[j,\infty):=(t(i))_{i=j}^\infty$.
	
	Now formulas of \LTL are defined by the grammar (where $p\in\mathrm{AP}$)
	\begin{equation*}
	\varphi:=p\mid\neg p\mid\varphi\wedge\varphi\mid\varphi\vee\varphi\mid X\varphi\mid F\varphi\mid G\varphi\mid\varphi U\varphi\mid\varphi R\varphi.
	\end{equation*}
The truth definition for this language is usually defined as follows.
\begin{definition}[Classical Semantics for \LTL]
Given a trace $t$, proposition $p\in\mathrm{AP}$, and \LTL formulas $\varphi$ and $\psi$, the semantics of linear temporal logic are as follows.
\begin{equation*}
\scriptscriptstyle
\begin{aligned}
		&t\models p \Leftrightarrow p\in t(0)\\
		&t\models \neg p \Leftrightarrow p\notin t(0)\\
		&t\models\varphi\wedge\psi \Leftrightarrow t\models\varphi\mbox{ and } t\models\psi\\
		&t\models\varphi\vee\psi \Leftrightarrow t\models\varphi\mbox{ or } t\models\psi\\
		&t\models X\varphi\Leftrightarrow t[1,\infty)\models\varphi\\
\end{aligned}
\qquad
\begin{aligned}
		&t\models F\varphi\Leftrightarrow\exists k\geq 0: t[k,\infty)\models\varphi\\
		&t\models G\varphi\Leftrightarrow\forall k\geq 0: t[k,\infty)\models\varphi\\
		&t\models\varphi U\psi\Leftrightarrow\exists k\geq 0: t[k,\infty)\models\psi\mbox{ and }\\
			&\qquad\forall k'<k: t[k',\infty)\models\varphi\\
		&t\models\varphi R\psi\Leftrightarrow\forall k\geq 0: t[k,\infty)\models\psi\mbox{ or }\\
			&\qquad\exists k'<k: t[k',\infty)\models\varphi
\end{aligned}
\end{equation*}
\end{definition}

Next we consider the team semantics of \LTL. As is usual for team semantics, we extended classical semantics by evaluating truth through a set of classical evaluations, which in this case means truth is defined by a set of traces.

A team of \teamltl is a set $T$ of traces. We denote $T[i,\infty):=\{t[i,\infty)\mid t\in T\}$ and, for $f\colon T\rightarrow \mathbb{N}$, $T[f,\infty):=\{t[f(t),\infty)\mid t\in T\}$. Below, for functions as above,  we write $f'<f$ if for all $t\in T$ it holds that $f'(t)<f(t)$.
\begin{definition}[Team Semantics for \LTL]
Suppose $T$ is a team, $p\in\mathrm{AP}$, and $\varphi$ and $\psi$ are \LTL formulas. Then the semantics of \LTL are defined as follows. The cases marked with * are the same in both the asynchronous and synchronous semantics.
\begin{equation*}
\begin{aligned}
		&T\models^* p \Leftrightarrow p\in t(0)\mbox{ for all } t\in T\\
		&T\models^* \neg p\Leftrightarrow p\notin t(0)\mbox{ for all } t\in T\\ 
		&T\models^*\varphi\wedge\psi\Leftrightarrow T\models\varphi\mbox{ and }T\models\psi\\
		&T\models^*\varphi\vee\psi\Leftrightarrow \exists T_1,T_2\subseteq T\colon \\
			&\quad T_1\cup T_2=T\mbox{ and }\\
			&\quad T_1\models\varphi\mbox{ and }T_2\models\psi\\
		&T\models^* X\varphi\Leftrightarrow T[1,\infty)\models\varphi\\
		&T\models^s F\varphi\Leftrightarrow\exists k\geq 0\colon \\
			&\quad T[k,\infty)\models\varphi\\
		&T\models^a F\varphi\Leftrightarrow \exists f\colon T\rightarrow \mathbb{N} \colon \\
			&\quad T[f,\infty)\models\varphi\\
		&T\models^s G\varphi\Leftrightarrow\forall k\geq 0\colon T[k,\infty)\models\varphi\\		
\end{aligned}
\qquad
\begin{aligned}
		&T\models^a G\varphi\Leftrightarrow\forall  f\colon T\rightarrow \mathbb{N}  \\
			&\quad T[f,\infty) \models\varphi\\
		&T\models^s\varphi U\psi\Leftrightarrow\exists k\geq 0\colon T[k,\infty)\models\psi\\
			&\quad \mbox{ and }\forall k'<k \colon T[k',\infty)\models\varphi\\
		&T\models^a\varphi U\psi\Leftrightarrow \exists f\colon T\rightarrow \mathbb{N} \colon \\
			&\quad  T[f,\infty) \models\psi\mbox{ and }\\
			&\quad\forall f'<f \colon T[f',\infty) \models\varphi\\
		&T\models^s\varphi R\psi\Leftrightarrow\forall k\geq 0\colon \\
			&\quad T[k,\infty)\models\psi\mbox{ or }\exists k'<k\colon \\
			&\quad T[k',\infty)\models\varphi\\
		&T\models^a\varphi R\psi\Leftrightarrow\forall  f\colon T\rightarrow \mathbb{N}  \\
			&\quad   T[f,\infty) \models\psi\mbox{ or }\\
			&\quad \exists f'<f\colon  T[f',\infty) \models\varphi\\
\end{aligned}
\end{equation*}
\end{definition}
We denote the asynchronous and the synchronous versions of \LTL by $\teamltl^a$ and $\teamltl^s$, respectively. In the following we consider extensions of these logics by the Boolean negation $\sim$ with the usual interpretation:
$$T\models \sim \varphi\Leftrightarrow T \not \models \varphi .  $$
We note that in the following it suffices to consider the temporal operators $X$ and $U$ as the rest can be easily defined using $\sim$ and the shorthand $\top = p \vee \neg p$ under both synchronous and asynchronous semantics:  $G \phi\equiv \sim F \sim \phi$, $F\phi \equiv \top U \phi$, $\phi R \psi \equiv \sim(\sim \phi U \sim \psi)$. 

\begin{definition}[$\FO$ under team semantics and $\FO(\dep{\ldots},\sim)$]
	Formulas of $\FO$ are defined by the grammar
	\begin{equation*}
	\varphi:=x=y\mid R(x_1,\ldots,x_n)\mid\neg x=y\mid\neg R(x_1,\ldots,x_k)\mid\varphi\wedge\varphi\mid\varphi\vee\varphi\mid\exists x\varphi\mid\forall x\varphi,
	\end{equation*}
	where $x$, $y$ and $x_1,\ldots,x_n$ are variables, and $R$ is a relation symbol of arity $n$. Formulas of $\FO(\dep{\dots},\sim)$ extend the grammar by  dependence atoms $\dep{x_1,\dots,x_n,y}$ and the Boolean negation $\sim\varphi$.
\end{definition}
First-order team semantics is defined using sets $S$  of assignments $s\colon X\rightarrow M$ with a common finite domain $X$ and an arbitrary set $M$ as co-domain. For an assignment $s$, $s[a/x]$ denotes the modified assignment that acts otherwise as $s$ except that it maps $x$ into $a$. For a so-called supplementation function  $F\colon S\to \mathcal{P}(M)\setminus\emptyset$, 
we define $$S[F/x]:=\{s[a/x]\mid s\in S \mbox{ and } a\in F(s)\}.$$ The duplication of a team with respect to variable $x$ is defined by  $S[M/x]:=\{s[m/x]\mid  m\in M, s\in S\}$. Supplementation and duplication are used to generalise existential and universal quantification, respectively, into team semantics.
\begin{definition}[Team Semantics for $\FO$]
Suppose $\mathcal{M}$ is a first-order model with domain $M$, and let $S$ be a team of $\mathcal{M}$. Suppose $n\geq 1$, and  $\varphi$ and $\psi$ are $\FO$ formulas. Then the team semantics of $\FO$ are defined by the following.
\begin{equation*}
\begin{aligned}
		&\mathcal{M}\models_S x=y \Leftrightarrow \forall s\in S\colon s(x)=s(y)\\
		&\mathcal{M}\models_S R(x_1,\ldots, x_n)\Leftrightarrow \forall s\in S\colon \\
			&\quad(s(x_1),\ldots,s(x_n))\in R^\mathcal{M}\\
		&\mathcal{M}\models_S \neg x=y \Leftrightarrow \forall s\in S\colon s(x)\neq s(y)\\
		&\mathcal{M}\models_S \neg R(x_1,\ldots,x_n)\Leftrightarrow\forall s\in S\colon \\
			&\quad(s(x_1),\ldots,s(x_n))\notin R^\mathcal{M}\\
\end{aligned}
\qquad
\begin{aligned}
		&\mathcal{M}\models_S\varphi\wedge\psi\Leftrightarrow \mathcal{M}\models_S\varphi\mbox{ and }\\
			&\quad\mathcal{M}\models_S\psi\\
		&\mathcal{M}\models_S\varphi\vee\psi\Leftrightarrow \exists S_1,S_2\subseteq S\\
			&\quad\mbox{such that }S_1\cup S_2=S\mbox{ and }\\
			&\quad \mathcal{M}\models_{S_1}\varphi\mbox{ and }\mathcal{M}\models_{S_2}\psi\\
		&\mathcal{M}\models_S\exists x\varphi\Leftrightarrow\exists F\colon S\to \mathcal{P}(M)\setminus\emptyset\\
			&\quad \mbox{such that }\mathcal{M}\models_{S[F/x]}\varphi\\
		&\mathcal{M}\models_S\forall x\varphi\Leftrightarrow \mathcal{M}\models_{S[M/x]}\varphi\\
\end{aligned}
\end{equation*}
First-order team logic $\FO(\dep{\ldots},\sim)$ extends $\FO$ with $\sim$ and dependence atoms:
\begin{equation*}
\begin{aligned}
		&\mathcal{M}\models_S \dep{x_1,\ldots,x_n,y} \Leftrightarrow \forall s_1,s_2\in S\colon \mbox{ if } s_1(x_i)=s_2(x_i)\mbox{ for all }i\in\{1,\ldots,n\},\\
		&\quad\mbox{ then }s_1(y)=s_2(y)
\end{aligned}
\end{equation*}
\end{definition}
For further details see for example \cite{vaananen07}. Note that for the so-called constancy atoms $\dep{y}$  the truth definition above amounts to requiring that $y$ has a constant value in the team. Dependence logic $\FO(\dep{\ldots})$ is known to be equi-expressive with existential second-order logic while 
 $\FO(\dep{\ldots},\sim)$ is  equi-expressive  full  second-order logic  ($\SO$) \cite{vaananen07,KontinenN11}. On the other hand,  the extensions of $\FO$
by mere constancy atoms or $\sim$ alone collapse to $\FO$ for sentences \cite{galliani13a,Luck18}.
We now define some properties of team logics relevant to the results presented in this paper. 

The formulas of  $\teamltl^a$ satisfy the following flatness property: $T\models \varphi$ if and only if $\forall t \in T \colon t \models \varphi$ while the same does not hold for the formulas of $\teamltl^s$ \cite{kmvz18}. This means that   $\teamltl^a$ without any extensions cannot define non-trivial hyperproperties.
Similarly,  $\FO$-formulas (by themselves) are also flat in the sense that for all $\mathcal{M}$ and $T$: $\mathcal{M}\models_T \varphi$ if and only if $\forall s \in T \colon \mathcal{M} \models_s \varphi$, where  $ \models_s$  refers to the standard Tarskian semantics of $\FO$ \cite{vaananen07}. On the other hand, e.g., the formula  $\dep{y}$  does not have the flatness property.
A logic $L$ with team semantics has the locality property if for all formulas $\varphi\in L$,  $\mathcal{M}$ and  $T$ it holds that $\mathcal{M}\models_T\varphi$ if and only if $\mathcal{M}\models_{T\upharpoonright\mathrm{Fr}(\varphi)}\varphi$, where $\mathrm{Fr(\varphi)}$ denotes the free variables of $\varphi$. All of the logics in the so-called lax team semantics, including  $\FO(\dep{\ldots},\sim)$, satisfy the locality property \cite{galliani12}.

\section{Embedding TeamLTL and Its Extensions into First-Order Team Logic}
In this section we define translations of \teamltl into first-order team logic. We begin by considering the translation under the asynchronous semantics.

\subsection{Asynchronous Semantics}
Let $T=\{t_j\mid j\in J\}$ be a set of traces. In order to simulate  $\teamltl^a$ and its extensions in first-order team semantics we encode  $T$ by a first-order structure $\mathcal{M}_T$ of vocabulary $\{\leq\}\cup\{P_i\mid p_i\in\mbox{AP}\}$ such that
\begin{align*}
	&Dom(\mathcal{M}_T)=T\times\mathbb{N}\\
	&\leq^{\mathcal{M}_T}=\{((t_i,n),(t_j,m))\mid i=j\mbox{ and } n\leq m\}\\
	&P^{\mathcal{M}_T}_i=\{(t_k,j)\mid p_i\in t_k(j)\}.
\end{align*}
The first positions of each of the traces of the temporal team $T$ are encoded as the values of variable $x$ by the set of assignments
$S^x_T=\{s_i\mid s_i(x)=(t_i,0) \mbox{ for all }t_i\in T\}$, which we will use as the first-order team in our translation. 
The first-order encoding of a set of traces goes back to  \cite{Finkbeiner017}.

We define a translation of $\teamltl^a$ formulas into the three-variable fragment $\FO^3(\dep{\ldots},\sim)$  of team logic. We use the variables $x,y$ and $z$ in the first-order side.
   In fact, since  $\FO^3(\dep{\ldots},\sim)$ is closed under the Boolean negation, we may assume $\sim$ also available in 
 $\teamltl^a$. Furthermore, it now suffices to consider the operators $X$ and $U$  as the rest can be easily defined using the Boolean negation.
 For readability we abbreviate the  following formula
 $$ x<y \wedge \forall z (z\le x \vee y \le z \vee (\neg x\le z\wedge \neg z\le x ))$$
defining  $y$ as the  successor of $x$  by $S(x,y)$.

Define the translation $ST_u$, for $u\in \{x,y,z\}$  via simultaneous induction as follows (we only list the formulas for $ST_x$):
\begin{equation*}
\begin{aligned}
&ST_x(p_i) = P_i(x)\\
&ST_x(\neg p_i) = \neg P_i(x)\\
&ST_x(\varphi\wedge\psi )= ST_x(\varphi)\wedge ST_x(\psi)\\
&ST_x(\varphi\vee\psi )= ST_x(\varphi)\vee ST_x(\psi)\\
\end{aligned}
\quad
\begin{aligned}
&ST_x(\sim \varphi) = \sim ST_x(\varphi)\\
&ST_x(X\varphi) = \exists y(S(x,y)\wedge ST_y(\varphi))\\
&ST_x(\varphi U\psi)=\exists y(x\leq y\wedge \dep{x,y}\wedge ST_y(\psi)\wedge \\
	&\mbox{  }\sim\exists z(x\leq z\wedge z\leq y\wedge\dep{x,z} \wedge\sim ST_z(\varphi)))
\end{aligned}
\end{equation*}
The next theorem can now be proved using induction on the formula $\varphi$.
\begin{theorem}\label{teamLTL^atoSO}
Let  $\varphi$  be a  $\teamltl^a(\sim)$- formula. Then for all non-empty $T$:
	 $$T\models\varphi\Leftrightarrow\mathcal{M}_T\models_{S^x_T} ST_x(\varphi).$$
\end{theorem}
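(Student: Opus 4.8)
The plan is to prove the theorem by induction on the structure of $\varphi$, but since the translation $ST_u$ is defined by a simultaneous induction over the three variables $u\in\{x,y,z\}$, I would first strengthen the statement into a form usable as an induction hypothesis. For an arbitrary team $S$ over $\mathcal{M}_T$ whose assignments have a single relevant free variable $u$, associate to it the temporal team $T_S^{u}:=\{t[n,\infty)\mid s\in S,\ s(u)=(t,n)\}$ consisting of the suffixes coded by the values $s(u)$. The strengthened claim is that for every subformula $\chi$ of $\varphi$, every $u\in\{x,y,z\}$, and every such $S$, one has $\mathcal{M}_T\models_S ST_u(\chi)$ if and only if $T_S^{u}\models\chi$; the theorem is the case $u=x$, $S=S^x_T$, where $T_{S^x_T}^{x}=T$. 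Throughout I would invoke the locality property to pass between a team and its restriction to the single free variable of $ST_u(\chi)$, so that coordinates introduced by earlier quantifiers are irrelevant. The non-emptiness hypothesis is needed only to guarantee that $\mathcal{M}_T$ is a genuine structure; the subteams arising in the induction may be empty and are handled by the empty-team conventions on both sides.

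The base cases and the propositional connectives are routine. For $p_i$ and $\neg p_i$ the equivalence is immediate from the definition of $P_i^{\mathcal{M}_T}$ and of $T_S^{u}\models p_i$, both of which reduce to $p_i\in t(n)$ for every coded pair, using flatness of atomic $\FO$-formulas. The conjunction case is direct, and the splitjunction case follows because the splits $S=S_1\cup S_2$ of the $\FO$-team correspond exactly to the splits $T_S^{u}=T_{S_1}^{u}\cup T_{S_2}^{u}$ of the temporal team: one direction takes images, the other pulls a split of $T_S^{u}$ back by collecting the assignments whose coded suffix lands in each part. The Boolean negation case is immediate from the induction hypothesis, as $\mathcal{M}_T\models_S \sim ST_u(\chi)$ iff $\mathcal{M}_T\not\models_S ST_u(\chi)$ iff $T_S^{u}\not\models\chi$ iff $T_S^{u}\models\sim\chi$.

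For the next-operator the key observation is that the abbreviation $S(x,y)$ is a flat first-order formula whose only models pair an element with its immediate $\leq$-successor on the same trace. Hence any supplementation witnessing $\exists y\,(S(x,y)\wedge ST_y(\chi))$ is forced to send each $s$ with $s(x)=(t,n)$ to the singleton $\{(t,n+1)\}$, so the supplemented team codes exactly $T_S^{x}[1,\infty)$; the equivalence then follows from the induction hypothesis applied to $ST_y(\chi)$ together with $T_S^{x}\models X\chi$ iff $T_S^{x}[1,\infty)\models\chi$.

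The until-operator is the crux, and I expect it to be the main obstacle. Here the dependence atom $\dep{x,y}$ collapses the existential $\exists y$ into the choice of one value of $y$ per value of $x$, i.e. into a shift function, while $x\leq y$ forces each such shift to be forward; by the induction hypothesis $ST_y(\psi)$ then asserts that $\psi$ holds of the shifted team, matching the choice of $f$ in the clause $T[f,\infty)\models\psi$. The nested block $\sim\exists z(x\leq z\wedge z\leq y\wedge\dep{x,z}\wedge \sim ST_z(\varphi))$ must be shown to express precisely the universal requirement $\forall f'<f\colon T[f',\infty)\models\varphi$: verifying that the two layers of Boolean negation turn the existence of a bad intermediate shift into this universal statement, and that the bounds on $z$ reproduce exactly the admissible range of $f'$, is the delicate computation. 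The genuinely subtle point, which I would treat with particular care, is that temporal teams are \emph{sets} of suffixes whereas the $\FO$-teams may code one suffix by several distinct domain elements $(t,n)$; one must therefore check that satisfaction of $ST_u(\chi)$ depends only on the induced set $T_S^{u}$, and it is exactly the forward-only structure of the translation that one would exploit to secure this invariance in the until-case.
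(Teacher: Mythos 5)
Your overall strategy---strengthening the theorem to an induction hypothesis that ranges over subformulas, over the three variables, and over the first-order teams arising during evaluation, and using locality to discard coordinates introduced by earlier quantifiers---is the same as the paper's, and your treatment of the atoms, the connectives, and $X$ (where the flatness of $S(x,y)$ forces the supplementation to be the successor) agrees with the paper's sketch. The gap lies in \emph{which} first-order teams you quantify over.

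You strengthen the claim to \emph{arbitrary} teams $S$, associate to $S$ the induced \emph{set} of suffixes $T^{u}_S$, and assert that satisfaction of $ST_u(\chi)$ depends only on $T^{u}_S$. That invariance is false, and with it your induction hypothesis. Take $t=(\{p\}\emptyset)^\omega$, $T=\{t\}$, and $S=\{s_1,s_2\}$ with $s_1(x)=(t,0)$ and $s_2(x)=(t,2)$; then $T^{x}_S=\{t\}$ because $t[0,\infty)=t[2,\infty)$. For $\psi=(\sim p)\wedge(\sim\neg p)$ and $\top=p\vee\neg p$ one has $\{t\}\not\models \top U\psi$, since no singleton team satisfies $\psi$; yet $\mathcal{M}_T\models_S ST_x(\top U\psi)$: because $s_1$ and $s_2$ carry \emph{distinct} $x$-values, the atom $\dep{x,y}$ permits shifting them independently, say to $(t,0)$ and $(t,3)$, after which $\sim P(y)\wedge\sim\neg P(y)$ holds, while the conjunct $\sim\exists z(\dots\wedge\sim ST_z(\top))$ is satisfied by every team. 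The first-order side can thus exploit two codes of one suffix as two independent traces, which the collapsed temporal team cannot; exactly this breaks the left-to-right extraction of a single shift function $f$ in your until case. The paper avoids the problem by parametrizing the induction not by an arbitrary $S$ but by a pair $(T',i)$ with $T'\subseteq T$ and $i\colon T'\to\mathbb{N}$, i.e.\ by teams $S^u_{T',i}$ containing exactly one assignment per trace of $T'$; this shape is preserved by splitting and by the $\dep{x,y}$-constrained supplementations in the $X$ and $U$ clauses, so the induction closes without ever invoking the false set-invariance. You correctly flagged the set-versus-multiset issue as the delicate point, but the remedy is to restrict the admissible teams, not to prove invariance under the induced set; note that even under the paper's parametrization the residual case of two distinct traces of $T'$ whose $i$-suffixes coincide still needs an explicit argument rather than an appeal to the ``forward-only structure'' of the translation.
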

\begin{proof}
See the appendix.
\end{proof}
As a corollary (see Appendix for the proof) we obtain that any $\teamltl^a(\sim)$-definable trace property can be defined by a sentence of  the logic  $\FO^3(\dep{\ldots},\sim)$.
\begin{corollary}\label{cor}
Let  $\varphi$  be a  $\teamltl^a(\sim)$- formula. Then there exists a sentence $\psi$ of $\FO^3(\dep{\ldots},\sim)$ such that for all non-empty $T$:
 $$T\models\varphi\Leftrightarrow\mathcal{M}_T\models \psi.$$
 \end{corollary}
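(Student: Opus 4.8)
The plan is to \emph{close off} the free variable $x$ in the translation $ST_x(\varphi)$ of Theorem~\ref{teamLTL^atoSO}, converting the evaluation on the specific team $S^x_T$ into the evaluation of a sentence on $\mathcal{M}_T$ (read, as usual for team semantics, on the singleton team $\{\emptyset\}$ containing the empty assignment). The whole difficulty lies in the fact that $S^x_T$ is not an arbitrary team, but the one assigning $x$ to \emph{precisely} the first positions $(t_i,0)$ of the traces; since $ST_x(\varphi)$ contains $\sim$ it is neither downward nor upward closed, so I must pin this team down exactly rather than merely guessing a subteam by an existential quantifier and relying on monotonicity.

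First I observe that in $\mathcal{M}_T$ the first positions are exactly the $\leq$-minimal elements, one per $\leq$-chain, so the set of $x$-values of $S^x_T$ coincides with the set of minimal elements. I then introduce two complementary \emph{flat} first-order formulas in the variables $x,z$,
$$\mathrm{first}(x):=\forall z(\neg z\leq x\vee x\leq z),\qquad \mathrm{notfirst}(x):=\exists z(z\leq x\wedge\neg x\leq z),$$
the former saying that $x$ is $\leq$-minimal and the latter being its classical negation. Both are ordinary $\FO$ formulas, hence flat, and both stay within the variables $\{x,z\}$.

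Next I set
$$\psi:=\forall x\big((\mathrm{first}(x)\wedge ST_x(\varphi))\vee\mathrm{notfirst}(x)\big),$$
which is a sentence of $\FO^3(\dep{\ldots},\sim)$ since only $x,y,z$ occur. Evaluating $\psi$ on $\{\emptyset\}$, the quantifier $\forall x$ yields the full team $U$ assigning $x$ to every element of $T\times\mathbb{N}$. The key step is a relativization argument for the outermost splitjunction: because its two disjuncts carry the complementary flat conditions $\mathrm{first}(x)$ and $\mathrm{notfirst}(x)$, any witnessing split $U=U_1\cup U_2$ is \emph{forced}. Indeed, by flatness $U_1\models\mathrm{first}(x)$ gives $U_1\subseteq S^x_T$ and $U_2\models\mathrm{notfirst}(x)$ gives $U_2\cap S^x_T=\emptyset$, so $U_1\cup U_2=U$ forces $U_1=S^x_T$ and $U_2=U\setminus S^x_T$. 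Consequently $\mathcal{M}_T\models_U\big((\mathrm{first}(x)\wedge ST_x(\varphi))\vee\mathrm{notfirst}(x)\big)$ holds iff $\mathcal{M}_T\models_{S^x_T}ST_x(\varphi)$. Combining this with Theorem~\ref{teamLTL^atoSO}, for non-empty $T$ we obtain $\mathcal{M}_T\models\psi$ iff $\mathcal{M}_T\models_{S^x_T}ST_x(\varphi)$ iff $T\models\varphi$, which is the claim.

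I expect the forcing of the split — guaranteeing that $ST_x(\varphi)$ is evaluated on exactly $S^x_T$ and nothing more or less — to be the only genuinely non-routine point; the rest is bookkeeping. I would additionally verify the degenerate cases, namely that $U_2=\emptyset$ causes no problem (the empty team satisfies $\mathrm{notfirst}(x)$) and that non-emptiness of $T$ guarantees $S^x_T\neq\emptyset$, so the hypothesis of Theorem~\ref{teamLTL^atoSO} is met.
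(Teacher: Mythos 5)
Your proof is correct and follows essentially the same route as the paper: the paper's sentence is $\forall x\big(\exists y(y<x)\vee(\forall y(\neg y<x)\wedge ST_x(\varphi))\big)$, which is your $\psi$ up to the order of the disjuncts and the choice of bound variable, and the paper likewise justifies it by observing that the complementary flat guards force the split of $\{\emptyset\}[\dom(\mathcal{M}_T)/x]$ into $S^x_T$ and its complement. Your write-up is, if anything, slightly more explicit about why the split is forced and about the degenerate cases.
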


\subsection*{Synchronous Semantics}
Under the synchronous team semantics  $\teamltl$ does not have the flatness property \cite{kmvz18} and it is incomparable to HyperLTL. Armed with the  translation from the previous section, we need to modify it to capture the synchronicity of the semantics on the first-order side. For this end we assume that the structure  $\mathcal{M}_T$ is equipped with the  equal-level predicate $E$ that is interpreted as follows:
$$ E^{\mathcal{M}_T}=\{((t,n),(t',m)) \mid n=m \}.  $$

Next we define a translation from $\teamltl^s(\sim)$ into $\FO^3(\dep{\dots},\sim)$ as follows:
The translation is analogous to the previous translation for the atomic propositions, $\wedge$, $\vee$, and $X$. For $U$ we define the translation as follows:
\begin{eqnarray*}
	ST^*_{x}(\varphi U\psi)&=&\exists z\exists y(\dep{z} \wedge x\le y \wedge E(z,y)\wedge  ST^*_{y}(\psi))\wedge\\
&&\sim\exists z\big(\exists x(\dep{x}\wedge  E(z,x)) \wedge  x\le z < y \wedge \sim ST^*_{z}(\varphi)\big).
\end{eqnarray*}

The proof of the following theorem is now analogous to that of Theorem \ref{teamLTL^atoSO} and Corollary \ref{cor}. 
\begin{theorem}\label{teamLTL^stoTL}
Let  $\varphi$  be a  $\teamltl^s(\sim)$- formula. Then $T\models\varphi\Leftrightarrow\mathcal{M}_T\models_{S^x_T} ST^*_x(\varphi)$ for all non-empty $T$ and there exists a $\mathsf{FO^3(\dep{\ldots},\sim)}$-sentence $\psi$ such that for all non-empty $T$
 $$T\models\varphi\Leftrightarrow\mathcal{M}_T\models \psi.$$
\end{theorem}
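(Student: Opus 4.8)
The plan is to prove the first equivalence by induction on the structure of $\varphi$, mirroring the proof of Theorem \ref{teamLTL^atoSO}, and then to extract the sentence $\psi$ exactly as in Corollary \ref{cor}. Since $ST^*_x$ agrees with $ST_x$ on the atoms, on $\wedge$, $\vee$, $X$ and $\sim$, the induction steps for those connectives carry over verbatim; the only genuinely new case is the synchronous until $U$, and this is where the equal-level predicate $E$ does its work. The whole argument rests on a correspondence I would make explicit first.

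The backbone of the induction is an invariant: whenever $ST^*_u(\chi)$ is evaluated, the current first-order team $S$ assigns to the active variable $u$ only elements of the form $(t_i,m)$ at one common level $m$, so that $S$ corresponds precisely to the shifted temporal team $T[m,\infty)$. The initial team $S^x_T$ satisfies this with $m=0$, the $X$-step preserves it by passing uniformly to the $\le$-successor via $S(x,y)$, and the role of the $U$-translation is to preserve it while jumping to a larger common level. I would state this correspondence and check it is maintained by each connective, the delicate point being synchronicity.

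For the $U$-case I would argue both directions. In the first conjunct $\exists z\exists y(\dep{z}\wedge x\le y\wedge E(z,y)\wedge ST^*_y(\psi))$ the constancy atom $\dep{z}$ pins $z$ to a single element $(t^*,n^*)$, and $E(z,y)$ then forces every value of $y$ to lie at the common level $n^*$; together with $x\le y$ this means $y$ ranges over the elements $(t_i,n^*)$ with $n^*\ge m$, i.e. it selects a \emph{uniform} shift by $k=n^*-m$. A witnessing $k$ for $T\models^s\varphi U\psi$ thus corresponds exactly to a choice of the constant level $n^*$, and $ST^*_y(\psi)$ holds there by the induction hypothesis. The second conjunct, $\sim\exists z(\exists x(\dep{x}\wedge E(z,x))\wedge x\le z< y\wedge \sim ST^*_z(\varphi))$, reuses the same constancy-plus-$E$ device (exploiting that we have only three variables) to make $z$ range over the \emph{common} intermediate levels strictly between the current level and $n^*$, while the double Boolean negation turns ``no intermediate level violates $\varphi$'' into ``$\varphi$ holds at every intermediate level'', matching the clause $\forall k'<k$. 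Verifying that this device quantifies over exactly the common intermediate levels, and that the three-variable bookkeeping never breaks the level invariant, is the step I expect to be the main obstacle.

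Finally, for the sentence $\psi$ I would follow Corollary \ref{cor}: the formula $ST^*_x(\varphi)$ has the single free variable $x$, and $S^x_T$ is the team of assignments sending $x$ to the $\le$-minimal (level-$0$) element of each trace. I would define minimality by a first-order formula and relativize a leading $\forall x$ to the minimal elements; this relativization is expressible because $\FO^3(\dep{\ldots},\sim)$ contains the Boolean negation and can therefore isolate the maximal subteam of level-$0$ assignments, exactly as in the asynchronous corollary. Combining this with the first equivalence yields $T\models\varphi\Leftrightarrow\mathcal{M}_T\models\psi$ for all non-empty $T$.
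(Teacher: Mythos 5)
Your proposal matches the paper's intended argument: the paper gives no separate proof of this theorem, stating only that it is analogous to Theorem \ref{teamLTL^atoSO} and Corollary \ref{cor}, which is precisely the induction (over subteams at a common level) and the relativization-to-minimal-elements step that you carry out, with the constancy-atom-plus-equal-level device correctly identified as the mechanism forcing the uniform shifts required by the synchronous $U$. No gaps; if anything, your write-up is more explicit than the paper's.
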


It is interesting to note that the role of the dependence atom in the translation of U in the asynchronous semantics can be replaced by the equal level predicate and the constancy atom in the synchronous case.

\section{Asynchronicity and Stutter Equivalence}

In  \cite{LUCK2020} L\"uck  generalised the notions of stutter-equivalence and stutter-invariance to $\teamltl^s$ \cite{LUCK2020}. In this section we define an analogous concept for the asynchronous team semantics, and show how these two conceptualizations relate to each other. 

We begin by defining the classical stuttering function.

\begin{definition}[Stuttering function]
A \textit{stuttering function} of a trace $t$ is a strictly increasing function $f\colon \mathbb{N} \to \mathbb{N}$, such that $f(0)=0$ and $t(f(k)) = \ldots = t(f(k+1)-1)$ for all $k \in \mathbb{N}$. 
\end{definition}
For functions $f\colon \mathbb{N} \to \mathbb{N}$ we denote the trace $t(f(0))t(f(1))t(f(2))\ldots$ by $t[f]$ and similarly for teams $T[f] = \{t[f] \mid t\in T\}$. Lück defined the synchronous stuttering function of a team $T$ to be a function that is a stuttering function for all traces $t \in T$ simultaneously.

The asynchronous variants of these definitions are similar, with the generalization that the stuttering functions are independent for each trace. We restrict our attention to finite  teams $T$ below.

\begin{definition}[Asynchronous stuttering function]
An asynchronous stuttering function of a team $T=\{t_1,\ldots, t_k\}$ is a function $F\colon \mathbb{N} \to \mathbb{N}^k$, such that $F(n) = (f_{t_1}(n), \ldots, f_{t_k}(n))$ for stuttering functions $f_{t_1},\ldots f_{t_k}$ of traces $t_1, \ldots, t_k \in T$.
\end{definition}
For stuttering functions $F\colon \N \to \N^k$ we denote by $t[F]$ the trace 
\begin{equation*}
t(f_t(0))t(f_t(1))t(f_t(2))\ldots,
\end{equation*}
 and furthermore for teams $T[F] = \{t[F] \mid t\in T\}$.

\begin{definition}[Asynchronous stutter-equivalence]
Teams $T,T'$ are asynchronously stutter-equivalent, in symbols $T\equiv^a_{st} T'$, if there are asynchronous stuttering functions $F$ of $T$ and $F'$ of $T'$ such that $T[F] = T'[F']$.
\end{definition}

Now it is clear that every synchronous stuttering function is also an asynchronous stuttering function, but the converse does not necessarily hold.

A formula $\varphi$ is stutter-invariant if $T\models \varphi$ if and only if $T'\models\varphi$, for all stutter equivalent teams $T$ and $T'$. Note that an asynchronous stuttering function $F$ of team $T$ induces a stuttering function   $F|S$ for any
subteam $S\subseteq T$. Next we prove analogous claims to the ones shown by Lück.

We call a function $i\colon T \to \N$ a configuration, and we define an component-wise order among configurations, i.e. for configurations $i$ and $j$ $i<j$ if and only if $i(t) < j(t)$ for all $t \in T$. The definition of $\leq$ is analogous. We consider the component-wise order because not all configurations of a team are attainable from each other with regards to the until-operator. For instance, let $T = \{t_1,t_2\}$ be a team. Now the configuration $(t_1(0),t_2(1))$ is not attainable from the configuration $(t_1(1),t_2(0))$, or vice versa. Hence we say that a configuration is attainable form another if for each trace of the former configuration the time-point is equal or later to the time-point of the same trace in the latter configuration.

\begin{lemma}\label{lemmavee}
Let $T$ and $T'$ be teams. Then $T \equiv_{st}^a T'$ if and only if, if $T = T_1 \cup T_2$, then there are subteams $T'_1,T'_2$ such that $T' = T'_1 \cup T'_2$ and $T_i \equiv_{st}^a T'_i$ for $i \in \{1,2\}$.
\end{lemma}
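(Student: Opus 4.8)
The plan is to prove the two implications separately, with essentially all of the work in the left-to-right direction; the right-to-left direction will follow by instantiating the splitting condition at the trivial decomposition. Throughout I would use the observation, already recorded in the paper, that the restriction $F|S$ of an asynchronous stuttering function of a team to a subteam $S$ is again an asynchronous stuttering function of $S$.

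For the forward direction, suppose $T \equiv_{st}^a T'$, witnessed by asynchronous stuttering functions $F$ of $T$ and $F'$ of $T'$ with $T[F] = T'[F']$. Given an arbitrary decomposition $T = T_1 \cup T_2$, I would \emph{pull the split back} along $F'$ by setting
$$T_i' := \{\, t' \in T' \mid t'[F'] \in T_i[F] \,\} \qquad (i \in \{1,2\}),$$
that is, sending each trace of $T'$ into the block(s) whose stuttered image it realizes. The candidate witnesses are then $F|T_i$ for $T_i$ and $F'|T_i'$ for $T_i'$, and it remains only to verify the two required properties.

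First, $T' = T_1' \cup T_2'$: every $t' \in T'$ satisfies $t'[F'] \in T'[F'] = T[F] = T_1[F] \cup T_2[F]$, so $t'$ lands in at least one $T_i'$, and overlaps are harmless since the splitjunction permits $T_1' \cap T_2' \neq \emptyset$. Second, to get $T_i \equiv_{st}^a T_i'$ I would prove the equality of stuttered images $T_i'[F'] = T_i[F]$. The inclusion $\subseteq$ is immediate from the definition of $T_i'$; for $\supseteq$, any element of $T_i[F] \subseteq T[F] = T'[F']$ has the form $t'[F']$ for some $t' \in T'$, and that $t'$ then satisfies $t'[F'] \in T_i[F]$, hence $t' \in T_i'$. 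With $F|T_i$ and $F'|T_i'$ as witnesses this equality is exactly $T_i \equiv_{st}^a T_i'$. This coincidence of stuttered images is the crux of the argument; everything else is bookkeeping, and the only point demanding care is keeping straight that the restricted functions remain genuine asynchronous stuttering functions.

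For the backward direction, I would apply the hypothesis to the trivial decomposition $T = T \cup \emptyset$. This produces subteams $T_1', T_2'$ with $T' = T_1' \cup T_2'$, together with $T \equiv_{st}^a T_1'$ and $\emptyset \equiv_{st}^a T_2'$. Since a stuttering image of a non-empty team is non-empty (stuttering functions are strictly increasing with $f(0)=0$), the relation $\emptyset \equiv_{st}^a T_2'$ forces $T_2' = \emptyset$, whence $T' = T_1'$ and $T \equiv_{st}^a T'$, as desired. The only mild subtlety here is this empty-team edge case, which the restriction to finite teams and the shape of stuttering functions dispatch cleanly.
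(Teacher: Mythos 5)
Your proof is correct and follows essentially the same route as the paper: pull the given split of $T$ back along the destuttering maps (using that $T[F]=T'[F']$) to obtain $T'_1,T'_2$, check that the restricted stuttering functions witness $T_i\equiv_{st}^a T'_i$ and that every $t'\in T'$ is covered, and dispose of the converse via the trivial decomposition $T=T\cup\emptyset$. If anything, your explicit definition $T'_i=\{t'\in T'\mid t'[F']\in T_i[F]\}$ makes the covering step slightly cleaner than the paper's appeal to $T'$ containing no duplicates.
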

\begin{proof}
See the Appendix.
\end{proof}

\begin{lemma}\label{lemmaU}
Let $T$ and $T'$ be teams such that $T \equiv_{st}^a T'$, as witnessed by the stuttering functions $F$ and $G$, and let $j$ be a configuration of $T$. Then there is a configuration $i\leq j$ of $T$ such that for all $t \in T$ there is a $a_t \in \N$ such that $f_t(a_t) = i(t)$ and $T[i,\infty)\equiv_{st}^a T[j,\infty)$. Furthermore, there is a configuration $k$ of $T'$ such that $T[i,\infty)\equiv_{st}^a T'[k,\infty)$.
\end{lemma}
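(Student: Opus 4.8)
The plan is to build $i$ by snapping each coordinate of $j$ down to the nearest block boundary of the witnessing stuttering function, to check that this snapping is invisible to $\equiv_{st}^a$, and finally to transport the resulting configuration across the equality $T[F]=T'[G]$ to produce $k$. Throughout I fix the components $f_t$ of $F$ and $g_{t'}$ of $G$, so that $T[F]=T'[G]$ reads $\{t[f_t]\mid t\in T\}=\{t'[g_{t'}]\mid t'\in T'\}$. For each $t\in T$ I set $a_t:=\max\{m\in\N : f_t(m)\le j(t)\}$, which is well defined since $f_t(0)=0\le j(t)$ and $f_t$ is strictly increasing and unbounded, and I put $i(t):=f_t(a_t)$. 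By construction $i(t)\in\mathrm{im}(f_t)$ with witness $a_t$ and $i(t)\le j(t)$, so $i\le j$ and the first two requirements hold. Maximality of $a_t$ gives $f_t(a_t)\le j(t)<f_t(a_t+1)$, so the whole interval $[i(t),j(t)]$ lies inside a single block $[f_t(a_t),f_t(a_t+1))$, on which $t$ is constant.

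Next I would prove $T[i,\infty)\equiv_{st}^a T[j,\infty)$. Since $t$ is constant on $[i(t),j(t)]$, the suffix $t[i(t),\infty)$ is exactly $t[j(t),\infty)$ with a prefix of $d_t:=j(t)-i(t)$ extra copies of its first letter prepended. Hence the per-trace function $h_t$ given by $h_t(0)=0$ and $h_t(n)=d_t+n$ for $n\ge 1$ is a stuttering function of $t[i(t),\infty)$ that collapses precisely this prefix, so $t[i(t),\infty)[h_t]=t[j(t),\infty)$, while the identity leaves $t[j(t),\infty)$ unchanged. Assembling the $h_t$ into an asynchronous stuttering function witnesses that $T[i,\infty)$ and $T[j,\infty)$ de-stutter to the same team; this step is routine.

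For the final clause I would transport the configuration across $F$ and $G$. The shifted functions $\tilde f_t(m):=f_t(a_t+m)-i(t)$ are stuttering functions of $t[i(t),\infty)$ with $t[i(t),\infty)[\tilde f_t]=(t[f_t])[a_t,\infty)$, so $T[i,\infty)$ de-stutters to $W:=\{(t[f_t])[a_t,\infty)\mid t\in T\}$. Since $t[f_t]\in T[F]=T'[G]$, each $t[f_t]$ equals $t'[g_{t'}]$ for some $t'\in T'$; I would use this matching to define $k(t')$ as the corresponding block boundary $g_{t'}(a_t)$ and, via the analogous shifted functions $\tilde g_{t'}$, show that $T'[k,\infty)$ de-stutters to the same set $W$, yielding $T[i,\infty)\equiv_{st}^a T'[k,\infty)$.

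The main obstacle is concentrated in this last step. The correspondence $t\mapsto t'$ extracted from the \emph{set} equality $T[F]=T'[G]$ need not be a bijection, so I must define $k$ coherently on \emph{every} trace of $T'$ and verify that the two de-stuttered teams coincide as sets rather than merely element by element. In particular, distinct traces of $T$ may de-stutter to the same sequence while being snapped to different block indices $a_t$, so the delicate point is to check that $W=\{(t[f_t])[a_t,\infty)\mid t\in T\}$ is genuinely realized as a de-stuttering of one suffix-configuration $k$ on $T'$. This bookkeeping around the set structure of teams — ensuring coverage of $T'$ and the correct collapse of duplicated de-stuttered traces — is where I expect the proof to require the greatest care.
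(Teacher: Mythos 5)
Your construction is essentially the paper's: it too picks, for each $t$, an index $a_t$ with $f_t(a_t)\le j(t)$ and $t(f_t(a_t))=t(j(t))$, sets $i(t)=f_t(a_t)$, witnesses $T[i,\infty)\equiv_{st}^a T[j,\infty)$ by shifting $F$ (its $F'$ with $f'_t(n)=f_t(n+a_t)$ is your $\tilde f_t$ without the normalization $-i(t)$ that you correctly add so that the shifted function starts at $0$; your alternative of collapsing only the initial block via $h_t$ and comparing against the identity on $T[j,\infty)$ is an equally valid minor variation), and obtains $k$ by transporting the block index across $T[F]=T'[G]$ exactly as you propose. The obstacle you flag at the end --- that the matching $t\mapsto t'$ extracted from the set equality need not be bijective, and that distinct traces of $T$ may de-stutter to the same sequence while being snapped to different block indices $a_t$ --- is a genuine issue, but the paper's proof does not treat it either: it simply asserts the existence of a configuration $k$ with $t'(g_{t'}(k(t')))=t(f_t(i(t)))$ for matched pairs and concludes $T'[k,\infty)[G']=T[i,\infty)[F']$, leaving the set-level bookkeeping implicit (the standing restriction to finite teams at least makes the choice of representatives unproblematic). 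So your proposal follows the same route and is not less complete than the published argument; if anything it is more explicit about which axioms of a stuttering function the shifted maps actually satisfy.
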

\begin{proof}
See the Appendix.
\end{proof}

\begin{theorem}\label{stutterInvariance}
Every $X$-free $\teamltl(\sim)$-formula is stutter invariant.
\end{theorem}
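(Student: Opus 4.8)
The plan is to prove the statement by structural induction on $\varphi$, establishing the stronger claim that $T\models\varphi\Leftrightarrow T'\models\varphi$ whenever $T\equiv_{st}^a T'$. First I would observe that, by the definability of $F$, $G$ and $R$ from $U$ and $\sim$ recorded in the preliminaries (and since $\varphi$ is $X$-free), it suffices to treat the cases $p$, $\neg p$, $\wedge$, $\vee$, $\sim$ and $U$. Since $\equiv_{st}^a$ is symmetric, for each case it is enough to establish the left-to-right implication and then apply it to the swapped pair $(T',T)$ to obtain the converse.

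The Boolean and atomic cases are routine. For atoms $p$ (and dually $\neg p$) I would use that every stuttering function satisfies $f_t(0)=0$, so $t[F](0)=t(0)$; hence the equality of the reduced teams $T[F]=T'[F']$ forces $\{t(0)\mid t\in T\}=\{t'(0)\mid t'\in T'\}$, and $T\models p$ depends only on this set of initial letters. For $\sim$ the claim is immediate from the induction hypothesis, and for $\wedge$ it follows by applying the induction hypothesis to both conjuncts. The splitjunction case $\varphi\vee\psi$ is exactly what Lemma~\ref{lemmavee} is designed for: if $T\models\varphi\vee\psi$ via a split $T=T_1\cup T_2$, the lemma produces a split $T'=T'_1\cup T'_2$ with $T_i\equiv_{st}^a T'_i$, and the induction hypothesis applied to $T_1\models\varphi$ and $T_2\models\psi$ yields $T'_1\models\varphi$ and $T'_2\models\psi$, so $T'\models\varphi\vee\psi$.

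The heart of the proof is the until case $\varphi U\psi$, which relies on Lemma~\ref{lemmaU}. Suppose $T\models\varphi U\psi$ as witnessed by a configuration $j$ of $T$, so that $T[j,\infty)\models\psi$ and $T[j',\infty)\models\varphi$ for every $j'<j$. Applying Lemma~\ref{lemmaU} to $j$ yields a configuration $i\leq j$ lying on stutter boundaries with $T[i,\infty)\equiv_{st}^a T[j,\infty)$, together with a configuration $k$ of $T'$ such that $T[i,\infty)\equiv_{st}^a T'[k,\infty)$. I would then claim that $k$ witnesses $T'\models\varphi U\psi$. For the $\psi$-part, two applications of the induction hypothesis for $\psi$ (first along $T[i,\infty)\equiv_{st}^a T[j,\infty)$, then along $T[i,\infty)\equiv_{st}^a T'[k,\infty)$) give $T'[k,\infty)\models\psi$. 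For the $\varphi$-part I must show $T'[k',\infty)\models\varphi$ for every configuration $k'<k$ of $T'$. Here I would exploit that $k$ sits on block boundaries: since the matched traces of $T$ and $T'$ reduce to the same block sequences, each $k'<k$ lands, on every trace, in a block strictly earlier than the block at which $k$ starts, so it corresponds to a configuration $i'$ of $T$ starting at the matching block, with $i'<i\leq j$ and $T[i',\infty)\equiv_{st}^a T'[k',\infty)$. Since $i'<j$, the until-witness for $T$ gives $T[i',\infty)\models\varphi$, and the induction hypothesis for $\varphi$ transfers this to $T'[k',\infty)\models\varphi$.

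I expect the main obstacle to be precisely this last transfer of the ``everywhere before the witness'' condition in the until case: establishing, for every earlier configuration $k'<k$ on the $T'$ side, a matching earlier configuration $i'<j$ on the $T$ side whose suffix is asynchronously stutter-equivalent to $T'[k',\infty)$. This requires reading off from the witnessing stuttering functions the block-by-block correspondence between the traces of $T$ and those of $T'$ and checking that the component-wise strict order is respected when passing to block-start configurations; I would either fold this monotone correspondence into a strengthened reading of Lemma~\ref{lemmaU} or derive it directly from the equality $T[F]=T'[F']$ of reduced teams. The remaining cases, by contrast, are short once Lemmas~\ref{lemmavee} and~\ref{lemmaU} are in hand.
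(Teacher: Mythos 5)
Your proposal is correct and follows essentially the same route as the paper's proof: structural induction with Lemma~\ref{lemmavee} handling the splitjunction and Lemma~\ref{lemmaU} handling the until case, including the two-step transfer of the witness configuration and the matching of earlier configurations across the stutter-equivalence. The ``main obstacle'' you identify (transferring the condition for all configurations strictly below the witness) is resolved in the paper exactly as you anticipate, by a second application of Lemma~\ref{lemmaU} on the $T'$ side to produce a block-boundary configuration $c'\leq l$ and a corresponding configuration $p$ of $T$ with $p<c\leq i$.
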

\begin{proof}
We consider formulas $\varphi \in \teamltl(\sim,U)$. Let $T$ and $T'$ be teams such that $T \equiv_{st}^a T'$, as witnessed by the functions $F$ and $G$ respectively. In this proof we say that a configuration $c$ is in accordance with a stuttering function $F$ to mean that for all $t$ there is an $n$ such that $c(t) = f_t(n)$. We prove the claim through induction on the structure of $\varphi$.
\begin{itemize}
\item For all propositional formulas the claim holds, since $F(0) = (0,\ldots,0) = G(0)$.
\item For the Boolean connectives $\wedge$ and $\sim$ the claim follows immediately from the induction hypothesis.
\item The case for the splitjunction $\vee$ is an immediate consequence of the induction hypothesis and Lemma \ref{lemmavee}.
\item For the case of the until operator $U$, we only show $\Rightarrow$, since the other direction is symmetric. Assume $T\models \psi U \theta$, i.e. there is a configuration $i$ of $T$ such that $T[i,\infty)\models \theta$ and $T[j,\infty)\models \psi$ for all configurations $j < i$. We aim to show that $T'\models \psi U \theta$. By Lemma \ref{lemmaU} we know that there is a configuration $c$ in accordance with the stuttering function $F$ such that $T[c,\infty)\models \theta$ and $c\leq i$, and there is a configuration $k$ of $T'$ such that $T[i,\infty) \equiv_{st}^a T'[k,\infty)$, which is in accordance with the stuttering function $G$. Thus by the induction hypothesis $T'[k,\infty)\models \theta$. We still need to show that $T'[l,\infty)\models \psi$ for all $l<k$. If $k(t)=0$ for all $t\in T'$, then we are done. If not, choose an arbitrary configuration $l<k$. Now by Lemma \ref{lemmaU} there is a configuration $c'$ in accordance with the stuttering function $G$ such that $c' \leq l$ and $T'[c',\infty) \equiv_{st}^a T'[l,\infty)$, and a configuration $p$ of $T$ such that $T'[c',\infty) \equiv_{st}^a T[p,\infty)$. Now $c'\leq l < k$, and thus $p < c \leq i$. Therefore $T[p,\infty)\models \psi$, and by the induction hypothesis, also $T'[l,\infty)\models\psi$.
\end{itemize}
\end{proof}

\begin{example}
We show that asynchronous stutter-equivalence does not imply synchronous stutter-equivalence. Consider the teams $T = \{\{\emptyset\}\{p\}^\omega,\{\emptyset\}\{\emptyset\}\{p\}^\omega\}$ and $T' = \{\{\emptyset\}\{p\}^\omega\}$. These two teams are asynchronously stutter-equivalent, as witnessed by the asynchronous stuttering functions $F(n) = (f_1,f_2)$ and $G(n) = n$, where $f_1 (n) = n$ and 
\begin{equation*}
f_2(n) = \begin{cases}
n &\mbox{if }n = 0 \\
n + 1 &\mbox{otherwise.} 
\end{cases}
\end{equation*}
On the other hand, synchronously stutter-equivalent teams necessarily have the same cardinality hence $T$ and $T'$ are not synchronously stutter-equivalent  \cite{LUCK2020}.

Now for all $X$-free $\teamltl^a(\sim)$ formulas $\varphi$, $T \models \varphi$ if and only if $T' \models \varphi$, however $T\nvDash^s Fp$ and $T'\models^s Fp$, which indicates that the $\teamltl^s$ formula $Fp$ cannot be expressed  by any $X$-free formula of $\teamltl^a(\sim)$. 
\end{example}

\section{SO Versus Arithmetic Definability}
The translations given in the previous section show that both $\teamltl^s(\sim)$ and $\teamltl^a(\sim)$ can be embedded into three-variable  fragments of first-order team logic (which is known to be equi-expressive with second-order logic).
In this section we compare second-order logic and  arithmetic as formalisms for defining hyperproperties.

We begin by showing that any SO-definable trace property can be also defined in third-order arithmetic. The analogous result relating $\teamltl^s(\sim)$ properties to third-order arithmetic was shown in  \cite{LUCK2020}. In arithmetic, a set $T$ can be represented by a third-order relation. Before going to the results,
we briefly discuss third-order arithmetic (see, e.g., \cite{Leivant94}). The standard model of arithmetic is denoted by $(\mathbb{N},+,\times,\le  0,1)$. A third-order type is a tuple $\tau = (n_1, \ldots, n_k)$, for natural numbers $k, n_1, \ldots, n_k \geq 1$. For each type $\tau$, we adopt  a countable  set of  $\tau$-variables
$\mathcal{V}_\tau := \{\mathfrak{a}, \mathfrak{b}, \mathfrak{c}, \ldots\}$, which are interpreted by third-order objects whose type is determined by $\tau$.  Syntactically, third-order logic extends the (more familiar) language  of second-order logic  by 
\begin{itemize}
\item new atomic formulas of the form  $\mathfrak{a}(A_1, \ldots, A_k) $, where for $\mathfrak{a}$ of type  $\tau = (n_1, \ldots, n_k)$, $A_i$ is a relation symbol of arity $n_i$ for $1\le i \le k$,
\item existential and universal quantification over third-order variables $\mathfrak{a}$. 
\end{itemize}
For a type   $\tau = (n_1, \ldots, n_k)$,  the third-order objects $\mathfrak{a}$ vary over elements $\mathcal{A}$ of the set 
$$ \mathcal{P}(\mathcal{P}(\mathbb{N}^{n_1})\times \cdots \times \mathcal{P}(\mathbb{N}^{n_k})) $$ 

The set of all formulas of third-order arithmetic, i.e., third-order formulas over the vocabulary
 $\{+, \times, 0, 1, =, \le \} $ is denoted by   $\Delta^3_0$ (and second-order arithmetic by $\Delta^2_0$). The subset of  $\Delta^3_0$ -sentences  true in  $(\mathbb{N},+,\times,\le  0,1)$ is denoted by a boldface letter
$\mathbf{\Delta^3_0}$  (analogously $\mathbf{\Delta^2_0}$). 

We are now ready to define the encoding of a team $T$ using suitable arithmetical relations. 
We  identify  the proposition $p_i$ with number $i$ and encode a  trace $t$ by a binary relation $S$ that $(j, k)\in S$  iff $p_k\in t(j)$. Now clearly a team $T$ can be  encoded by  a  third-order object $\mathcal{A}_T\subseteq \mathcal{P}(\mathbb{N}^2)$ of type $((2))$.

\begin{theorem}\label{thm1}
Let  $\phi\in SO$  be a  sentence. Then there exists a formula $\Tr (\phi)(\mathfrak{a})$ of $\Delta^3_0$ such that for all trace sets $T$:  
$$ \mathcal{M}_T \models \phi \iff  (\mathbb{N},+,\times,\le  0,1)\models \Tr (\phi)(\mathcal{A}_T / \mathfrak{a}), $$
where $\mathcal{M}_T$ is defined as in section 3.1.
\end{theorem}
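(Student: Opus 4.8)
The plan is to define the translation $\Tr$ by induction on the structure of $\phi$, first fixing a uniform scheme for representing the objects of $\mathcal{M}_T$ by arithmetical objects and then translating each $\SO$-construct into the matching arithmetical one. First I set up the representation. A domain element $(t,n)\in T\times\mathbb{N}$ is represented by a pair $(S,n)$, where $S$ is the binary relation encoding $t$ (so $(j,k)\in S$ iff $p_k\in t(j)$) and $n$ is a natural number; since $t\mapsto S$ is injective and $\mathcal{A}_T$ is exactly the set of these relations, quantifying a domain element amounts to quantifying a second-order object $S$ with $\mathfrak{a}(S)$ together with a first-order number $n$. An $r$-ary relation over the domain is represented by a type-$((2))$ object: using a pairing function on $\mathbb{N}$ (definable from $+,\times$) I encode each $r$-tuple $((S_1,n_1),\ldots,(S_r,n_r))$ of domain elements as a single binary relation $W$ (tagging the $i$-th relation component and the $i$-th position by the index $i$ in the first coordinate), and the whole relation as the set of all such $W$. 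I fix auxiliary $\Delta^3_0$-formulas $\mathrm{eq}(S,S')$ expressing equality of binary relations and $\mathrm{dec}_r(W,S_1,n_1,\ldots,S_r,n_r)$ expressing that $W$ encodes the displayed tuple.

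With the representation fixed, I define $\Tr$ compositionally. Each first-order variable $u$ is handled by a pair $(S_u,n_u)$, and atomic formulas translate as $P_i(u)\mapsto S_u(n_u,\underline{i})$ (with $\underline{i}$ the numeral for $i$), $u\le u'\mapsto \mathrm{eq}(S_u,S_{u'})\wedge n_u\le n_{u'}$, and $u=u'\mapsto \mathrm{eq}(S_u,S_{u'})\wedge n_u=n_{u'}$; note this correctly forces incomparability of elements on distinct traces. A second-order atom becomes $X(u_1,\ldots,u_r)\mapsto \exists W(\mathfrak{X}(W)\wedge \mathrm{dec}_r(W,S_{u_1},n_{u_1},\ldots,S_{u_r},n_{u_r}))$, where $\mathfrak{X}$ is the type-$((2))$ variable assigned to $X$. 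Boolean connectives translate homomorphically. A first-order quantifier becomes a guarded pair of quantifiers, $\exists u\,\psi\mapsto \exists S_u\exists n_u(\mathfrak{a}(S_u)\wedge\Tr(\psi))$ and dually for $\forall$, and a second-order quantifier over an $r$-ary relation becomes a third-order quantifier over a type-$((2))$ variable, $\exists X\,\psi\mapsto \exists\mathfrak{X}\,\Tr(\psi)$ and dually. Since $\phi$ is a sentence, the only surviving free variable is $\mathfrak{a}$, so $\Tr(\phi)(\mathfrak{a})$ is a $\Delta^3_0$-formula with the single free third-order variable of type $((2))$ demanded by the statement.

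Correctness then follows by induction on $\phi$, proving for every subformula and every assignment the stronger statement: if an $\mathcal{M}_T$-assignment $\alpha$ and an arithmetical assignment $\beta$ are related by the representation above (with $\mathfrak{a}$ interpreted as $\mathcal{A}_T$), then $\mathcal{M}_T,\alpha\models\psi$ iff $(\mathbb{N},+,\times,\le,0,1),\beta\models\Tr(\psi)$. The atomic and Boolean cases are immediate from the definitions of the auxiliary predicates; the quantifier cases use that $S\mapsto$ (the trace encoded by $S$) is a bijection between $\mathcal{A}_T$-elements and traces of $T$, and that $\mathfrak{X}\mapsto\{((S_1,n_1),\ldots):\exists W\in\mathfrak{X}\ \mathrm{dec}_r(W,S_1,n_1,\ldots)\}$ is onto the $r$-ary relations over $T\times\mathbb{N}$, so the arithmetical quantifiers range over exactly the intended objects. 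Specializing to the sentence $\phi$ and $\mathfrak{a}=\mathcal{A}_T$ yields the theorem.

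I expect the main obstacle to be the bookkeeping in the representation of relations: one must check both that $\mathrm{dec}_r$ is genuinely definable in arithmetic and, more importantly, that every $r$-ary relation over $T\times\mathbb{N}$ is represented by some type-$((2))$ object and conversely, so that the third-order quantifiers neither under- nor over-shoot the intended range (stray elements of $\mathfrak{X}$ that encode no proper tuple are harmless, but surjectivity onto the relations must be argued). The arity-uniformity (one fixed $r$ per second-order variable) and the injectivity of the trace encoding keep this manageable, but this is where the content lies; the remaining cases are mechanical.
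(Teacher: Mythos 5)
Your proposal is correct and follows essentially the same route as the paper's proof: domain elements of $T\times\mathbb{N}$ are encoded as pairs (a binary relation coding the trace, a natural number), first-order quantifiers become pairs of quantifiers guarded by $\mathfrak{a}$, second-order quantifiers become third-order quantifiers, and correctness is established by induction over suitably related assignments. The only divergence is a bookkeeping choice: you pack an $r$-tuple into a single binary relation via a pairing function and quantify a type-$((2))$ object, whereas the paper represents a (unary) relation by a type-$((2),(1))$ object pairing the trace-relation with the singleton $\{i\}$; both encodings work and your remark that stray, non-decoding elements are harmless while surjectivity onto the intended relations must be checked is exactly the right point to verify.
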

\begin{proof}
See the appendix.
\end{proof}

Next we consider the special case of  countable teams $T$. In this case we can precisely characterize $SO$-definable hyperproperties arithmetically.  Assume that $T$ is countable, i.e., $T=\{t_i  \mid i\in \mathbb{N}\}$ or  $T=\{t_i \mid 0\le i \le n\}$ for some $n$. Now $T$ can be encoded by a single ternary relation  such that 
\begin{equation}\label{eq1}
 (i, j, k)\in A_T \iff p_k\in t_i(j).
 \end{equation}
 In order to encode also the cardinality of $T$, we let $A_T\subseteq \mathbb{N}^4$ to consist of the tuples
 $$(0,i,j,k)\in A_T \iff p_k\in t_i(j)  $$
 together with  $(1,n,n,n)$ if  $n=|T|$.  It is now straightforward to modify the translation $\Tr$ in such a way that $\Tr' (\phi)$ becomes a $\Delta^2_0$-formula as now an element of the domain  $T\times \mathbb{N}$ can be encoded by a pair $(i,j)$ (recall \eqref{eq1}) and a $k$-ary relation  $X\subseteq (T\times \mathbb{N})^k$ directly by a $2k$-ary  relation $R_X\subseteq \mathbb{N}^{2k}$. Define a translation   $\Tr'$ inductively as follows. Below $i$ is a definable constant.
\begin{equation*}
\begin{aligned}
&\Tr' (x = y) :=  x_1=y_1 \wedge x_2 = y_2 \\
 &\Tr' (x \leq y) :=  x_1=y_1 \wedge x_2 \leq y_2\\
 &\Tr' (P_i(x)) := R_T(0,x_1,x_2,i) \\
 \end{aligned}
 \quad
 \begin{aligned}
 &\Tr' (X(x_1,\ldots, x_k)) := R_X(x^1_1,x^1_2,\ldots, x^k_1,x^k_2)  \\
 &\Tr' (\exists x\varphi) := \exists x_1\exists x_2 (\theta_1 \wedge Tr' (\varphi))  \\
 &\Tr' (\exists X\varphi) := \exists R_X (\theta_2 \wedge\Tr' (\varphi))
\end{aligned}
\end{equation*}
Above the formula $\theta_1$ restricts the values of $x_1,x_2$ ($\theta_2$  analogously restricts   $R_X$) to range either over $\mathbb{N}^2$ or  $\{0,1,\ldots ,n\}\times \mathbb{N}$ depending on whether $T$ is infinite or finite (which can be detected by the existence of a tuple $(1,n,n,n)\in T$). The following theorem can be now proved analogously to Theorem \ref{thm1}.
\begin{theorem}\label{thm2}
Let  $\phi\in SO$  be a  sentence. Then there exists a formula $\Tr' (\phi)(R_T)$ of $\Delta^2_0$ such that for all countable trace sets $T$:  
$$ \mathcal{M}_T \models \phi \iff  (\mathbb{N},+,\times,\le  0,1)\models \Tr' (\phi)(A_T/R_T). $$
\end{theorem}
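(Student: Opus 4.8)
The plan is to establish Theorem~\ref{thm2} by mimicking the inductive structure of the proof of Theorem~\ref{thm1}, but with the domain $T\times\mathbb{N}$ encoded concretely inside $\mathbb{N}$ rather than as a genuine third-order object. The key observation is that once a countable team $T$ is encoded by the single relation $A_T\subseteq\mathbb{N}^4$ via \eqref{eq1} together with the cardinality tuple, every element of $\mathrm{Dom}(\mathcal{M}_T)=T\times\mathbb{N}$ is represented by a pair $(i,j)\in\mathbb{N}^2$, and every $k$-ary second-order relation $X\subseteq(T\times\mathbb{N})^k$ is represented by a $2k$-ary relation $R_X\subseteq\mathbb{N}^{2k}$. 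Because these representations live at the first- and second-order level of arithmetic, no third-order quantification is needed, and the whole translation $\Tr'$ lands in $\Delta^2_0$. I would first verify that the translation clauses already displayed above correctly track the semantics of $\mathcal{M}_T$: the atomic clauses for $x=y$, $x\le y$ and $P_i(x)$ follow directly from the definitions of $\le^{\mathcal{M}_T}$ and $P_i^{\mathcal{M}_T}$ in Section~3.1 under the pairing, and the connective clauses $\wedge,\vee,\neg$ are handled verbatim as in Theorem~\ref{thm1}.

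The heart of the argument is the correctness of the quantifier clauses together with the relativizing formulas $\theta_1$ and $\theta_2$. First I would make precise that $\theta_1(x_1,x_2)$ asserts that $(x_1,x_2)$ names a legitimate element of $T\times\mathbb{N}$, i.e.\ either $T$ is infinite (no tuple of the form $(1,n,n,n)$ occurs in $A_T$) and then $x_1,x_2$ range freely over $\mathbb{N}^2$, or $T$ is finite with $|T|=n$ (witnessed by $(1,n,n,n)\in A_T$) and then we additionally require $x_1\le n$. The formula $\theta_2(R_X)$ analogously forces $R_X$ to contain only tuples whose coordinate-pairs satisfy $\theta_1$, so that $R_X$ genuinely codes a relation on the correct domain. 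With these in place, the clauses $\Tr'(\exists x\,\varphi):=\exists x_1\exists x_2(\theta_1\wedge\Tr'(\varphi))$ and $\Tr'(\exists X\,\varphi):=\exists R_X(\theta_2\wedge\Tr'(\varphi))$ correctly simulate existential quantification over domain elements and over relations; the universal cases (or dependence atoms and $\sim$, if one works directly in $\FO(\dep{\ldots},\sim)$ rather than $\SO$) are dual and routine.

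Given these clauses, the theorem follows by a straightforward induction on $\phi\in\SO$, exactly parallel to Theorem~\ref{thm1}. The inductive invariant I would carry is: for every subformula $\phi$ with free first-order variables among $x^1,\dots,x^m$ and free second-order variables among $X^1,\dots,X^r$, and for every assignment sending each $x^j$ to an element $(i_j,\ell_j)\in T\times\mathbb{N}$ and each $X^j$ to a relation on that domain, we have $\mathcal{M}_T\models\phi$ under that assignment iff $(\mathbb{N},+,\times,\le,0,1)\models\Tr'(\phi)$ under the corresponding arithmetical assignment of the $x^j_1,x^j_2$ and the $R_{X^j}$. Specializing to sentences $\phi$ gives the stated biconditional with the single free relation symbol $R_T$ interpreted by $A_T$.

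The main obstacle is the finite/infinite case distinction and making sure $\theta_1,\theta_2$ faithfully capture the intended domain uniformly in a single $\Delta^2_0$-formula. Since the arithmetic must detect $|T|$ from $A_T$ and then bound the first coordinate accordingly, I expect the delicate step to be checking that the relativization is sound and complete in both branches simultaneously—in particular that universal quantifiers get relativized correctly (so that spurious out-of-domain witnesses for $\exists$, or vacuous out-of-domain elements for $\forall$, cannot change the truth value), and that $\theta_2$ pins down exactly the relations on $T\times\mathbb{N}$ rather than arbitrary subsets of $\mathbb{N}^{2k}$. Once this relativization is verified the induction goes through mechanically, so I would devote most of the written proof to stating $\theta_1$ and $\theta_2$ explicitly and discharging the quantifier cases, and simply remark that all remaining cases are identical to the proof of Theorem~\ref{thm1}.
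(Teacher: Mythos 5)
Your proposal matches the paper's own treatment: the paper defines exactly this translation $\Tr'$ (pair-encoding of domain elements, $2k$-ary relations for $k$-ary ones, relativizing formulas $\theta_1,\theta_2$ keyed to the presence of a tuple $(1,n,n,n)$ in $A_T$) and then notes that the theorem follows by the same induction as Theorem~\ref{thm1}. Your write-up is correct and simply spells out the relativization and the inductive invariant in more detail than the paper does.
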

The above translation is quite simple due to the fact that only the representation of $T$ changes but the logic remains the same. It is worth noting though that in arithmetic we can, e.g., express a property
$$ T=\{t_i\ \mid t_i(i)=\{ p_i\}  \mbox{ and }  t_i(j)=\emptyset \mbox{ for } j\neq i \mbox{ and } i,j\in \mathbb{N}  \}   $$
that addresses infinitely many propositions $p_i$ whereas over  $\mathcal{M}_T$ only finitely many of the propositions can be mentioned in a formula via the relations $P_i$. Another difference between the representations is that in arithmetic the team $T$ is always naturally ordered  whereas  $\mathcal{M}_T $
carries no ordering for the traces. It turns out that these properties are the only obstacles for proving a converse of Theorem \ref{thm2}.  Below $I$ is either $\mathbb{N}$ or $\{0,1,\ldots, n\}$ for some $n$.
\begin{definition} 
Let  $\phi(R) \in \Delta^2_0$ be a formula. The formula $\phi(R)$ is called  \emph{trace-order invariant} if for all countable teams $T=(t_i)_{i\in I}$ and  all permutations $f\colon I\rightarrow I$: 
$$ (\mathbb{N},+,\times,\le  0,1)\models \phi(A_T/R) \leftrightarrow \phi(A_{T^f}/R)$$
where  $T^f=(t_{f(i)} ) _{i\in I} $.
\end{definition}

\begin{theorem}\label{thm3}
Let  $\phi(R) \in \Delta^2_0$ be  trace-order invariant and let  $\mathrm{AP}$ be finite. Then there exists a  $SO$-sentence $\psi$ such that for all countable  trace sets $T$ over $\mathrm{AP}$:  
$$ \mathcal{M}_T \models \psi \iff  (\mathbb{N},+,\times,\le  0,1)\models \phi(A_T/R). $$
\end{theorem}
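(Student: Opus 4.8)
The plan is to realise the reverse direction of Theorem~\ref{thm2} as a second-order \emph{interpretation} of the arithmetical structure $(\mathbb{N},+,\times,\le,0,1)$ together with the relation $A_T$ inside $\mathcal{M}_T$. The two features of the arithmetical representation that $\mathcal{M}_T$ lacks — a global numbering of the traces and (because $\mathrm{AP}$ is now finite) access to $P_k$ only for the finitely many relevant $k$ — are handled respectively by existentially guessing an enumeration of the traces, whose irrelevance is exactly the content of trace-order invariance, and by the finiteness assumption on $\mathrm{AP}$. Given such an interpretation I would let $\psi$ be the relativisation of $\phi(R)$ to it, with $R$ replaced by the reconstructed copy of $A_T$.

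\emph{Interpreting the standard model.} The relation $\le^{\mathcal{M}_T}$ is a disjoint union of linear orders, one per trace, each isomorphic to $(\mathbb{N},\le)$; comparability $x\le y\vee y\le x$ is an equivalence whose classes are exactly these columns, and the minimal elements $\{(t,0)\mid t\in T\}$ form a first-order definable set $M$ in bijection with $T$. I would existentially quantify a single minimal element $a$ and use its column $C_a=\{y\mid a\le y\}$ as the domain of numbers, with $0:=a$ and successor first-order definable from $\le$. Since full second-order logic over $(\mathbb{N},\le)$ defines the graphs of $+$ and $\times$ (successor is first-order definable, and the recursion equations pin down $+$ and $\times$ uniquely among second-order-quantified relations), $C_a$ carries an $SO$-definable copy of $(\mathbb{N},+,\times,\le,0,1)$; as every column is order-isomorphic to the genuine $(\mathbb{N},\le)$, this copy is the standard model, so no nonstandard obstruction arises.

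\emph{Reconstructing $A_T$.} I would existentially quantify a relation $e$ witnessing an order-preserving bijection from an initial segment of $C_a$ onto $M$, i.e. a numbering $i\mapsto e(i)$ of the traces; for infinite $T$ this exhausts $C_a$, while finiteness of $M$ and the value $|M|=n$ are themselves $SO$-expressible (e.g. by the existence of such a bijection onto the segment determined by $n$), which lets me install the cardinality tuple $(1,n,n,n)$ exactly as in the definition of $A_T$. To read off $(0,i,j,k)$ I use that, for $b:=e(i)$, the $j$-th element of the column $C_b$ is $SO$-definable as the image of $j$ under the unique order isomorphism between the initial segments of $C_a$ and $C_b$, and I test $P_k$ at that element. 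This produces a $4$-ary relation $R^{*}$ over $C_a$ which, under the isomorphism $C_a\cong\mathbb{N}$, is precisely $A_{T^{e}}$ for the team indexed according to $e$.

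Finally $\psi$ asserts the existence of $a$, of the arithmetic on $C_a$, and of $e$ making the relativisation of $\phi$ (with $R$ interpreted by $R^{*}$) true. For any fixed $e$ the reconstructed relation is $A_{T^{f}}$ for some permutation $f$ of a reference indexing, so by trace-order invariance its satisfaction of $\phi$ is independent of $e$ and equals $(\mathbb{N},+,\times,\le,0,1)\models\phi(A_T/R)$; since a suitable $e$ exists for every non-empty countable $T$, the existential quantifier and the intended value coincide, yielding $\mathcal{M}_T\models\psi\iff(\mathbb{N},+,\times,\le,0,1)\models\phi(A_T/R)$. I expect the main obstacle to be the arithmetical interpretation itself: verifying rigorously that full $SO$ over a single column defines $+$, $\times$ and the counting needed for the cardinality tuple, and that positions in distinct columns can be aligned within $SO$. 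The conceptual device — discharging the missing trace order through an existentially guessed enumeration justified by trace-order invariance — is, by contrast, immediate once the interpretation is in place.
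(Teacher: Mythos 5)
Your proposal is correct and follows essentially the same route as the paper: interpret $(\mathbb{N},+,\times,\le,0,1)$ on a single column via existentially quantified second-order relations for $+$ and $\times$, reconstruct $A_T$ by existentially guessing an order-preserving bijection from an initial segment of that column onto the minimal elements together with a cross-column alignment (your per-column order isomorphisms play exactly the role of the paper's existentially quantified equal-level predicate), exploit finiteness of $\mathrm{AP}$ for the finite disjunction over the $P_k$, and discharge the arbitrariness of the guessed enumeration by trace-order invariance. The only cosmetic difference is that the paper first assumes a linear order $\le_t$ on the traces and quantifies it away at the end, whereas you quantify the enumeration directly; you also spell out the cardinality tuple $(1,n,n,n)$, which the paper leaves implicit.
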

\begin{proof} Let  $T=(t_i)_{i\in I}$ where $I=\{0,\ldots, n\}$ for some $n$  or $I=\mathbb{N}$. Let us assume first that the structure  $\mathcal{M}_T$ is also equipped with some linear ordering $\le_t$  of the elements $\{(t_i,0) \mid i\in I \}$. Now  the set $\{(t_0,i) \mid i\in \mathbb{N}\}$ (which is linearly ordered by $\le$) can be treated as an isomorphic copy of $\mathbb{N}$ over which we can interpret  the structure $(\mathbb{N},+,\times,\le  0,1)$ (the predicates $+$ and $\times$ can be defined using second-order existential quantification.) Now the following formulas can be constructed utilizing  the ``dual'' of this interpretation (see e.g., \cite{Immerman}):
\begin{itemize}
\item A formula $\theta(x,y,u,v)$ that defines (an isomorphic copy of) the relation $A_T$ from the information in the structure $\mathcal{M}_T$. For this an order preserving bijection $f$ between (an initial segment of) $\{(t_0,i) \mid i\in \mathbb{N}\}$ and the set  $\{(t_i,0) \mid i\in I \}$ and an equal-level predicate $E$
 can be first existentially quantified. The formula then  asserts that 
$$\bigvee_{p_k \in \mathrm{AP}} \big( x=0 \wedge v=k \wedge  \exists z(P_k(z)\wedge  f(y)\le z \wedge E(u,z) ) \big). $$
 Note that the finiteness of  $\mathrm{AP}$ is crucial for this to work and that  $f(y)\le z$ guarantees that the index of the trace of $z$ equals $y$. 

\item A sentence $\psi$ that expresses that 
$$ (\mathbb{N},+,\times,\le  0,1)\models \phi(A_T/R)$$
using the formula for $A_T$.
\end{itemize}
Finally we can get rid of the ordering $\le_t$  using second-order existential quantification as the set $\{(t,0) \mid t\in T\}$ is a definable subset of $T\times \mathbb{N}$. Now the sentence $\exists \le_t \psi$ will satisfy the claim of the theorem. 
\end{proof}

\section{Complexity of Model Checking and Satisfiability}

In this section we apply our results to characterize the complexity of model checking and satisfiability for first-order team logic and some of its variable fragments for hyperproperties (i.e., over structures $\mathcal{M}_T$). 

For the model checking problem it is  asked whether a  team of traces generated by a given finite Kripke structure satisfies a given formula. We consider Kripke structures of the form $\kK=(W, R, \eta , w_0)$, where $W$ is a finite set of states, $R\subseteq W^2$ a left-total transition relation, $\eta\colon W\rightarrow 2^{\mathrm{AP}}$ a labelling function, and $w_0\in W$ an initial state of $W$. A path $\sigma$ through $\kK$ is an infinite sequence $\sigma \in W^\omega$ such that $\sigma[0]= w_0$ and $(\sigma[i], \sigma[i + 1]) \in R$ for every $i \geq 0$. The trace of $\sigma$ is defined as $t(\sigma) \dfn \eta(\sigma[0])\eta(\sigma[1])\dots \in (2^\mathrm{AP})^\omega$.  A Kripke structure $\kK$ then induces  a trace set  $\traces(\kK) = \{t(\sigma) \mid \sigma \mbox{ is a path through $\kK$}\}$.
 
\begin{definition}
\begin{enumerate}
\item The model checking problem of a logic $\LL$ is the following decision problem:
Given a formula $\varphi\in\LL$  and a Kripke structure $K$ over $\mathrm{AP}$, determine whether $\mathit{Traces}(K) \models \varphi$,
\item The (countable) satisfiability problem of a logic $\LL$ is the following decision problem:
Given a formula $\varphi\in\LL$, determine whether $T\models \phi$ for some (countable)   $T\neq \emptyset$.
\end{enumerate}
\end{definition}

In \cite{LUCK2020} L\"uck gave a complete picture of the complexity properties of synchronous $\teamltl(\sim)$.  By combining  L\"uck's  results with ours, we are able to show the following. Below  $\FO^3(\dep{\ldots},\sim)$ is assumed to be equipped with the equal-level predicate $E$.
\begin{theorem}
\begin{enumerate}
\item\label{i} The model checking and satisfiability problems of first-order team logic $\FO(\dep{\ldots},\sim)$, its three-variable fragment  $\FO^3(\dep{\ldots},\sim)$,  and $\SO$ are equivalent to $\mathbf{\Delta^3_0}$ under logspace-reductions.
\item\label{ii}  The countable satisfiability problem of $\FO(\dep{\ldots},\sim)$, its three-variable fragment  $\FO^3(\dep{\ldots},\sim)$, and $\SO$ is equivalent to $\mathbf{\Delta^2_0}$ under logspace-reductions.
\item\label{iii}  The model checking and (countable) satisfiability problems of  $\teamltl^a(\sim)$ are logspace-reducible to  $\mathbf{\Delta^3_0}$ ($\mathbf{\Delta^2_0}$).
\end{enumerate}
\end{theorem}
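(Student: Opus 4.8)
The plan is to prove all three parts by a single ``sandwich'' strategy: the upper bounds (reducibility \emph{to} arithmetic) come from the translations of this section and Section 5, while the matching lower bounds (hardness) are imported from L\"uck's analysis of synchronous $\teamltl(\sim)$ through the translation of Theorem~\ref{teamLTL^stoTL}. Throughout I would first observe that every translation involved---the Kamp-style $ST_x$ and $ST^*_x$, the arithmetic translations $\Tr$ and $\Tr'$, and the standard back-and-forth translation witnessing $\FO(\dep{\ldots},\sim)\equiv\SO$---is computable in logarithmic space, since each is a straightforward syntactic induction; this is exactly what upgrades the reductions from merely effective to logspace.

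For the upper bounds of part~\ref{i} I would start from Theorem~\ref{thm1}, which for a sentence $\phi$ of $\SO$ (equivalently, via the equi-expressiveness translation, of $\FO(\dep{\ldots},\sim)$ and of its fragment $\FO^3(\dep{\ldots},\sim)$) yields a $\Delta^3_0$-formula $\Tr(\phi)(\mathfrak{a})$ with $\mathcal{M}_T\models\phi\iff(\N,+,\times,\le,0,1)\models\Tr(\phi)(\mathcal{A}_T/\mathfrak{a})$. To reduce \emph{satisfiability} I would output the closed $\Delta^3_0$-sentence $\exists\mathfrak{a}(\mathrm{team}(\mathfrak{a})\wedge\Tr(\phi)(\mathfrak{a}))$, where $\mathrm{team}(\mathfrak{a})$ asserts that the type-$((2))$ object $\mathfrak{a}$ encodes a non-empty trace set (any non-empty $\mathfrak{a}\subseteq\mathcal{P}(\N^2)$ does, so this is just $\exists S\,\mathfrak{a}(S)$); this sentence is true iff $\phi$ has a non-empty model. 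To reduce \emph{model checking} against a Kripke structure $K$ I would instead output $\exists\mathfrak{a}(\mathrm{def}_K(\mathfrak{a})\wedge\Tr(\phi)(\mathfrak{a}))$, where $\mathrm{def}_K(\mathfrak{a})$ is a $\Delta^3_0$-formula, built in logspace from $K$, pinning down $\mathfrak{a}=\mathcal{A}_{\traces(K)}$ by saying that a binary relation $S$ belongs to $\mathfrak{a}$ exactly when $S$ encodes the labelling of some path through $K$ (a path being a $\sigma\colon\N\to W$ with $\sigma(0)=w_0$ and $(\sigma(i),\sigma(i{+}1))\in R$, which is quantifiable inside arithmetic). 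The same two reductions, now starting from the $\Delta^2_0$-translation $\Tr'$ of Theorem~\ref{thm2} and encoding teams by a relation $R\subseteq\N^4$, give the upper bound of part~\ref{ii} for countable satisfiability into $\mathbf{\Delta^2_0}$. Part~\ref{iii} is then immediate: precomposing these reductions with the logspace translation $ST_x$ of Theorem~\ref{teamLTL^atoSO} (from $\teamltl^a(\sim)$ into $\FO^3(\dep{\ldots},\sim)$) shows that model checking and (countable) satisfiability of $\teamltl^a(\sim)$ reduce to $\mathbf{\Delta^3_0}$ (resp.\ $\mathbf{\Delta^2_0}$).

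For the lower bounds of parts~\ref{i} and~\ref{ii} I would invoke L\"uck's result~\cite{LUCK2020} that model checking and satisfiability (resp.\ countable satisfiability) of $\teamltl^s(\sim)$ are $\mathbf{\Delta^3_0}$-hard (resp.\ $\mathbf{\Delta^2_0}$-hard) under logspace reductions. Theorem~\ref{teamLTL^stoTL} supplies, for each $\teamltl^s(\sim)$-formula $\phi$, a logspace-computable $\FO^3(\dep{\ldots},\sim)$-formula $ST^*_x(\phi)$ with $T\models\phi\iff\mathcal{M}_T\models_{S^x_T}ST^*_x(\phi)$; since for model checking $T=\traces(K)$ and the input structure $K$ is left unchanged, this map is simultaneously a reduction of the $\teamltl^s(\sim)$ model-checking problem to the $\FO^3(\dep{\ldots},\sim)$ model-checking problem and of the corresponding (countable) satisfiability problems. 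Hence $\FO^3(\dep{\ldots},\sim)$ inherits $\mathbf{\Delta^3_0}$- and $\mathbf{\Delta^2_0}$-hardness, and via the logspace translations witnessing $\FO^3(\dep{\ldots},\sim)\subseteq\FO(\dep{\ldots},\sim)\equiv\SO$ so do the two larger logics. Combining these with the upper bounds yields the claimed equivalences.

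The main obstacle I anticipate is not a single hard idea but the bookkeeping that makes the reductions genuinely logspace and genuinely closed. Two points deserve care. First, constructing $\mathrm{def}_K(\mathfrak{a})$ (and its $\Delta^2_0$ analogue) so that the finite Kripke structure's transition relation and labelling are coded faithfully and uniformly into arithmetic, and verifying that the assertion ``$\mathfrak{a}$ is exactly the set of path-labellings of $K$'' is expressible at the required order. Second, confirming that the equi-expressiveness $\FO(\dep{\ldots},\sim)\equiv\SO$ and the fragment inclusions are realized by logspace translations, so that hardness established for $\FO^3(\dep{\ldots},\sim)$ really transfers upward to $\SO$ and membership established for $\SO$ really transfers downward to the fragments.
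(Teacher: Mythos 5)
Your proposal is correct and follows essentially the same sandwich strategy as the paper: upper bounds via the arithmetic translations $\Tr$ and $\Tr'$ of Theorems~\ref{thm1} and~\ref{thm2} (with the team either existentially quantified as non-empty for satisfiability or pinned down by a formula defining $\traces(K)$ for model checking), and lower bounds by transferring L\"uck's hardness results for $\teamltl^s(\sim)$ through Theorem~\ref{teamLTL^stoTL}. The only cosmetic difference is that you sketch the construction of $\mathrm{def}_K(\mathfrak{a})$ directly, whereas the paper cites L\"uck's Theorem~4.3 for the logspace-constructible $\Delta^3_0$-formula $\psi_{\kK}(\mathfrak{a})$ defining $\traces(\kK)$.
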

\begin{proof} In each of the logics the results follow by utilizing the results of  \cite{LUCK2020} and the translations given in the previous sections. Let us consider the claim for satisfiability in  \ref{i}.  Note first that by Theorem \ref{teamLTL^stoTL} the  satisfiability problem of $\teamltl^s(\sim)$ can be easily reduced to that  of $\FO^3(\dep{\ldots},\sim)$ implying a  logspace-reduction from $\mathbf{\Delta^3_0}$ to the satisfiability problem of $\FO^3(\dep{\ldots},\sim)$.  On the other hand, any sentence $\phi$ of $\FO^3(\dep{\ldots},\sim)$ can be first translated to an equivalent $\SO$-sentence  $\phi^*$  \cite{KontinenN11}, and then,  using  Theorem \ref{thm1}, we see that $\phi$ is satisfiable iff 
$$(\mathbb{N},+,\times,\le  0,1)\models \exists \mathfrak{a} (\mathfrak{a}\neq \emptyset \wedge \Tr (\phi^*)). $$
For model checking we note that a logspace-reduction from $\mathbf{\Delta^3_0}$ to the model checking problem of $\FO^3(\dep{\ldots},\sim)$ can be obtained just like with satisfiability above. On the other hand, by Theorem 4.3 in \cite{LUCK2020} it is possible  to construct (in logspace) a  $\Delta^3_0$-formula $\psi_{\kK}(\mathfrak{a})$ defining the set $\traces(\kK)$ for any finite $\kK=(W, R, \eta , w_0)$. Hence now for any given sentence $\phi\in \SO$ and $\kK$ it holds that 
$$\mathcal{M}_ {\traces(\kK)}\models \phi \iff (\mathbb{N},+,\times,\le  0,1)\models \exists \mathfrak{a} (\psi_{\kK}(\mathfrak{a}) \wedge \Tr (\phi)). $$
Hence the model checking problem of $\SO$, $\FO(\dep{\ldots},\sim)$, and  $\FO^3(\dep{\ldots},\sim)$ reduces to  $\mathbf{\Delta^3_0}$. 

Finally claim \ref{iii} follows by  \ref{i} and \ref{ii}  and the fact that  $\teamltl^a(\sim)$ can be translated  into $\SO$ by Theorem \ref{teamLTL^atoSO}.

\end{proof}

It is worth noting that  the previous theorem can be extended to any logic $\LL$  effectively residing between:
$$\teamltl^s(\sim)\le \LL\le \SO.$$

This is a potent result, as the upper bound for many hyperlogics have not as of yet been studied.

\section{Conclusion and Future Work}
We have studied $\teamltl(\sim)$ under both synchronous and asynchronous semantics, showing through compositional translations embeddings into first-order team logic and second-order logic. Furthermore, using these  translations we were able to transfer the known complexity properties of $\teamltl^s(\sim)$ to various logics that reside between  $\teamltl^s(\sim)$  and second-order logic. Many questions remain such as:
\begin{itemize}
\item How does $\teamltl^a(\sim)$ relate to the recently defined asynchronous variant of     \hyltl \cite{DBLP:journals/corr/abs-2104-14025}?
\item Is it possible to find a non-trivial extension of $\teamltl^a$ that translates to $\FO(\sim)$ or to the extension of $\FO$ by constancy atoms? This would be interesting as both of the logics  collapse to $\FO$ for sentences.
\item What is the relationship of the logics $\teamltl^s(\sim)$ and  $\teamltl^a(\sim)$; by our result the synchronous $Fp$ cannot be expressed in $\teamltl^a(\sim)$ by any $X$-free formula and, on the other hand,  asynchronous $Fp$ cannot be expressed by any  $\teamltl^s$-formula even if the Boolean \emph{disjunction} is allowed in the formulas (see Proposition 8 in \cite{VBHKF20}).  
\end{itemize}

\subsubsection{Acknowledgements}
This research was supported by the Finnish Academy (grants 308712 and 322795).

\bibliographystyle{splncs04}
\bibliography{biblio}

\section*{Appendix}
\subsubsection{Proofs}
\begin{proof}[Sketch of the proof of Theorem \ref{teamLTL^atoSO}]

Let $T'\subseteq T$ and $i\colon T'\rightarrow \mathbb{N}$. We use $T'$ and $i$ as a means to  refer to  any  team that might be relevant for the evaluation of $\teamltl^a(\sim)$ formulas when starting the evaluation with $T$. On the first-order side the corresponding team will be 
$$ S^x_{T',i} := \{ s \mid s(x)= (t,i(t)) \mbox{ and } t \in T'  \}.  $$ 
We can now show using simultaneous induction on $\phi$ that for all $T'\subseteq T$, $i\colon T'\rightarrow \mathbb{N}$, and $u\in \{x,y,z\}$
$$ \{t[i(t),\infty)\mid t\in T'\}\models\varphi  \Leftrightarrow \mathcal{M}_T\models_{S^u_{T',i}} ST_u(\varphi).$$
\begin{itemize} 
\item  Assume $\varphi=  p_i$ and $T'\subseteq T$, $i\colon T'\rightarrow \mathbb{N}$ are arbitrary.
Now  
\begin{eqnarray*}
\{t[i(t),\infty)\mid t\in T'\}\models\varphi &\Leftrightarrow& p_i \in t(i(t)) \mbox{ for all } t\in T' \\
 							&\Leftrightarrow& (t,i(t))\in P^{ \mathcal{M}_T}_i  \mbox{ for all } t\in T' \\
 &\Leftrightarrow& \mathcal{M}_T\models_{S^x_{T',i}} P_i(x)\\
 		&\Leftrightarrow&  \mathcal{M}_T\models_{S^x_{T',i}} 	ST_x(\varphi) 
\end{eqnarray*}
Note that the second equivalence holds by the definition of the structure $\mathcal{M}_T$ and the third equivalence  by first-order team semantics of atomic formulas.

\item Assume $\varphi=  X\psi$ and $T'\subseteq T$ and $i\colon T'\rightarrow \mathbb{N}$ are arbitrary. Let $i^+$ be defined by $i^+(t):=i(t)+1$ for all $t$.
Now  
\begin{eqnarray*}
\{t[i(t),\infty)\mid t\in T'\}\models\varphi &\Leftrightarrow& \{t[i^+(t),\infty)\mid t\in T'\}\models\psi \\
 							 &\Leftrightarrow&  \mathcal{M}_T\models_{S^y_{T',i^+}} ST_y(\psi)\\
 		&\Leftrightarrow&  \mathcal{M}_T\models_{S^x_{T',i}} \exists y(S(x,y)\wedge ST_y(\psi))\\ 
&\Leftrightarrow&  \mathcal{M}_T\models_{S^x_{T',i}} 	ST_x(X\varphi) 
\end{eqnarray*}
The second equivalence above holds by the induction assumption for  $ST_y(\psi)$. For the the third equivalence we use the facts that the supplementation function $F$ for $y$ is uniquely determined by the formula and $x$ is not free in $ST_y(\psi)$. Note that by locality it holds that
$$ \mathcal{M}_T\models_{S^x_{T',i} [F/y]}   ST_y(\psi) \Leftrightarrow   \mathcal{M}_T\models_{S^y_{T',i^+}}   ST_y(\psi), $$ since $S^y_{T',i^+}$ is the reduct of  $S^x_{T',i}[F/y]$ to the team with domain $\{y\}$.
\end{itemize}
The proof for the connectives is straightforward and for the temporal operator $U$ it is similar to the case of $X$.
\end{proof}

\begin{proof}[Proof of Corollary \ref{cor}]
We  show that $T\models\varphi$ if and only if $\mathcal{M}_T\models \psi$, where $\psi$ is the sentence:
$$\forall x ( \exists y (y<x) \vee (\forall y (\neg y<x )\wedge ST_x(\varphi))).  $$ 
Note that
\begin{eqnarray*}
\mathcal{M}_T\models \psi &\Leftrightarrow&  \mathcal{M}_T\models_{\{\emptyset\}[\dom(\mathcal{M}_T)/x]}  \exists y  (y<x) \vee (\forall y \neg  (y<x)  \wedge ST_x(\varphi)). \\
 &\Leftrightarrow&  \mathcal{M}_T\models_{S^x_{T} }  ST_x(\varphi).
\end{eqnarray*}
In the second line the team $\{\emptyset\}[\dom(\mathcal{M}_T)/]x]$ (i.e., $\dom(\mathcal{M}_T)$) has to be  split into two disjoint parts: the subset of elements having a predecessor and to those not having a predecessor ($=S^x_{T} $).  The first disjunct is then trivially satisfied (by flatness it behaves classically) hence we arrive at the case which is equivalent to  $T\models\varphi$ by Theorem  \ref{teamLTL^atoSO}.

\end{proof}

\begin{proof}[Proof of Lemma \ref{lemmavee}]
Suppose 2. Now $T = T \cup \emptyset$, however the empty set is only stutter equivalent to itself. Thus $T \equiv_{st}^a T'$.

Suppose 1 and suppose that $T = T_1 \cup T_2$, hence we have asynchronous stuttering functions $F$ of $T$ and $G$ of $T'$, such that $T[F] = U = T'[G]$. We consider the subteams induced by the stuttering function, i.e. $T_i[F|T_i]$. Since $T[F] = T'[G]$, there exist subteams $T'_1,T'_2$ such that $T'_i[G|T'_i] = T_i[F|T_i]$. Thus $T_i \equiv_{st}^a T'_i$.

It remains to show that the subteams $T'_1$ and $T'_2$ constitute the entirety of the team $T'$. It is clear that $T'_1\cup T'_2 \subset T'$, so it remains to show the converse. Let $t' \in T'$. Now, since $T \equiv_{st}^a T'$, there exists a $t \in T$ such that $t[F|\{t\}] = t'[G|\{t'\}]$. By our assumption, the trace $t$ belongs to either $T_1$ or $T_2$. Without loss of generality we may assume that $t \in T_1$, but then $t'[G|\{t'\}] \in T'_1[G|T'_1]$. Since the team $T'$ is a set, i.e. it does not contain duplicates, we may conclude that $t' \in T'$.
\end{proof}

\begin{proof}[Proof of Lemma \ref{lemmaU}]
  By the definition of the asynchronous stuttering function, for each coordinate of $j(t)$ there exists a constant $a_t$ such that $f_t(a_t) \le j(t)$ and $t(f_t(a_t)) = t(j(t))$. Let $i$ be the configuration defined by $i(t) = f_t(a_t)$. Now we can use the stuttering function $F$ to construct stuttering functions $F'$ and $F''$ for $T[i,\infty)$ and $T[j,\infty)$ respectively. First of we define  $F'$ via  $f'_t(n) = f_t(n + a_t)$ for all $t \in T$, which clearly is a stuttering function of $T[i,\infty)$. Next we define
    \begin{align*}
      F''(n) = \begin{cases}
        (j(t))_{t\in T} &\mbox{if } n = 0\\
        F'(n) &\mbox{otherwise.}
      \end{cases}
    \end{align*}
  Since $t(f_t(a_t)) = t(i(t)) = t(j(t))$ for all $t \in T$, it follows that $T[i,\infty)[F'] = T[j,\infty)[F'']$. Thus $T[i,\infty) \equiv_{st}^a T[j,\infty)$.

  For the second claim we use the assumption that $T \equiv_{st}^a T'$. We let the configuration $i$ be as above. Now for all $n$, $t \in T$ and $t' \in T'$ it holds that $t(f_t(n)) = t'(g_{t'}(n))$. Thus there exists some configuration $k\colon T' \to \N$ such that $t'(g_{t'}(k(t'))) = t(f_t(i(t)))$, which allows us to define the stuttering function $G'$ of $T'[k,\infty)$ as $g'_{t'}(n) = g_{t'}(n + k(t'))$ for all $t' \in T'$. Clearly now $T'[k,\infty)[G'] = T[i,\infty)[F']$, and hence $T[i,\infty) \equiv_{st}^a T'[k,\infty)$.
\end{proof}

\begin{proof}[Proof of Theorem \ref{thm1}] 
We define a inductive translation $\Tr$ from  second-order logic to third order arithmetic as follows. The key ideas in the translation are:
\begin{itemize}
\item an element of the domain $T\times \mathbb{N}$ of the structure $\mathcal{M}_T$ can be uniquely identified by specifying a trace $t$ and $i\in  \mathbb{N}$. Hence, syntactically, a first-order variable $x$ can be  encoded by a pair of variables $(R_x, z_x)$ where $R_x$ is a binary relation and $z_x$ is a first-order variable;
\item a subset  of  $T\times \mathbb{N}$ is a set of pairs $(t,i)$ and hence in the translation a unary relation $X$ is encoded by a third-order variable $\mathfrak{b}_X$ of type $((2),(1))$, where  the unary relation encodes $i$ by the singleton $\{ i\}$. 
\end{itemize}
Define now a formula translation $\Tr $ as follows.  We omit below the obvious cases of the Boolean connectives and, for clarity,  we consider only unary relations $X$ on the side of $SO$. It is straightforward to write the corresponding translations also for relations of arbitrary arities. 
\begin{equation*}
\begin{aligned}
&\Tr (x = y) :=  \forall u \forall v (R_x(u,v) \leftrightarrow \\
 &\quad R_y(u,v))\wedge z_x = z_y\\
 &\Tr (x \leq y) := \forall u \forall v (R_x(u,v) \leftrightarrow \\
 &\quad R_y(u,v))\wedge z_x \leq z_y\\
 &\Tr (P_i(x)) := \mathfrak{a}(R_x) \wedge R_x(z_x,i)\\
 \end{aligned}
 \quad
 \begin{aligned}
 &\Tr (X(x)) := \exists Y(\mathfrak{b}_X (R_{x},Y)\wedge Y=\{z_x\}) \\
 &\Tr (\exists x\varphi) := \exists R_x  \exists z_x(  \mathfrak{a}(R_x) \wedge  \Tr (\varphi))\\
 &\Tr (\exists X\varphi) := \exists \mathfrak{b}_X(\forall R\forall Y( \mathfrak{b}_X(R,Y)\rightarrow  \mathfrak{a}(R)\wedge \\
 &\quad |Y|=1)\wedge \Tr (\varphi))
 \end{aligned}
\end{equation*}
In the above formulas, $i$ denotes a (definable) constant. It is now straightforward to show using induction on $\phi$ that for all $s$ and $s^*$:
\begin{equation*} 
 \mathcal{M}_T \models_s \phi \iff  (\mathbb{N},+,\times,\le  0,1)\models_{s^*} \Tr (\phi)(\mathcal{A}_T /\mathfrak{a}),
\end{equation*} 
where the interpretations $s$ and $s^*$ relate to each other as described above.
\end{proof}

\end{document}